\def\IEEEsubmission{0}
\newtheorem{theorem}{Theorem}
\newtheorem{lemma}[theorem]{Lemma}
\def\BibTeX{{\rm B\kern-.05em{\sc i\kern-.025em b}\kern-.08em
		T\kern-.1667em\lower.7ex\hbox{E}\kern-.125emX}}
\DeclareAcronym{OOK}{
short = OOK,
long = on-off keying
}
\DeclareAcronym{OFSK}{
short = OFSK,
long = on-off frequency shift keying
}
\DeclareAcronym{OFDM}{
short = OFDM,
long = orthogonal frequency division multiplexing
}
\DeclareAcronym{BC}{
short = BC,
long = backscatter communication
}
\DeclareAcronym{RFID}{
short = RFID,
long = radio frequency identification 
}
\DeclareAcronym{BD}{
short = BD,
long = backscatter device 
}
\DeclareAcronym{SR}{
short = SR,
long = symbiotic radio 
}
\DeclareAcronym{BS}{
short = BS,
long = base station
}
\DeclareAcronym{SR-BC}{
short = SBC,
long = symbiotic backscatter communication
}
\DeclareAcronym{CP}{
short = CP,
long = cyclic prefix
}
\DeclareAcronym{FSK}{
short = FSK,
long = frequency shift keying
}
\DeclareAcronym{MFSK}{
short = MFSK,
long = multiple frequency shift keying
}
\DeclareAcronym{DLI}{
short = DLI,
long = direct-link interference
}
\DeclareAcronym{Rx}{
short = Rx,
long = legacy receiver
}
\DeclareAcronym{SIC}{
short = SIC,
long = successive interference cancellation
}
\DeclareAcronym{ISI}{
short = ISI,
long = inter-symbol interference
}
\DeclareAcronym{MCU}{
short = MCU,
long = microcontroller unit
}
\DeclareAcronym{STO}{
short = STO,
long  = symbol time offset
}
\DeclareAcronym{CFO}{
short = CFO,
long  = carrier frequency offset
}
\DeclareAcronym{ICI}{
short = ICI,
long  = inter-carrier interference
}
\DeclareAcronym{3GPP}{
short = 3GPP,
long  = 3rd generation partnership project
}
\DeclareAcronym{IoT}{
short = IoT,
long  = Internet of Things
}
\DeclareAcronym{5G-NR}{
short = 5G-NR,
long  = 5G New Radio
}
\DeclareAcronym{PMD}{
short = PMD,
long  = probability of missed detection
}
\DeclareAcronym{PD}{
short = PD,
long  = probability of Detection
}
\DeclareAcronym{PFA}{
short = PFA,
long  = probability of false alarm
}
\DeclareAcronym{SNR}{
short = SNR,
long  = signal-to-noise ratio
}
\DeclareAcronym{ROC}{
short = ROC,
long  = receiver operating character
}
\DeclareAcronym{BER}{
short = BER,
long  = bit error rate
}
\DeclareAcronym{SINR}{
short = SINR,
long  = signal-to-noise interference ratio
}
\DeclareAcronym{AWGN}{
short = AWGN,
long  = additive white Gaussian noise
}
\DeclareAcronym{RF}{
short = RF,
long  = radio-frequency
}
\DeclareAcronym{Nb}{
short = Nb,
long  = number of null subcarriers dedicated for SR -BC
}
\DeclareAcronym{Nd}{
short = Nd,
long  = number of subcarriers for primary data transmission
}
\DeclareAcronym{Ncp}{
short = Ncp,
long  = CP length
}
\DeclareAcronym{DFT}{
short = DFT,
long  = discrete Fourier Transform
}
\DeclareAcronym{PDF}{
short = PDF,
long  = probability density function
}
\DeclareAcronym{CDF}{
short = CDF,
long  = cumulative distribution function
}
\DeclareAcronym{PSS}{
short = PSS,
long  = primary synchronization signal
}
\DeclareAcronym{SSS}{
short = SSS,
long  = secondary synchronization signal
}
\DeclareAcronym{IBDI}{
short = IBDI,
long  = inter-backscatter device interference
}
\DeclareAcronym{SO}{
short = SO,
long  = semi orthogonal
}
\DeclareAcronym{FO}{
short = FO,
long  = fully orthogonal
}
\begin{document}

\title{
%IBDI and DLI Mitigation in a Multi-BD Symbiotic Radio
Interference Mitigation and Spectral Efficiency Enhancement in a Multi-BD Symbiotic Radio
%Reliability and Spectral Efficiency Enhancements in Multiple Backscatter Devices Symbiotic Radio 
%Advancing  Multi-BD  Symbiotic Radio with Interference Mitigation and Spectral Efficiency Enhancement
}

\author{Fikiri Salum Uledi, Muhammad Bilal Janjua, \c{C}a\u{g}r{\i} \"{O}zgen\c{c} Etemo\u{g}lu and H\"{u}seyin Arslan, ~\IEEEmembership {Fellow,~IEEE}
        % <-this % stops a space
\thanks{F. S. Uledi is with the Department of Electrical and Electronics Engineering, Istanbul Medipol University, Istanbul, 34810, T\"{u}rkiye, and also with the Department of Electronic and Telecommunications, University of Dar es Salaam, Tanzania (email: fikiri.uledi@std.medipol.edu.tr). H. Arslan is with the Department of Electrical and Electronics Engineering, Istanbul Medipol University, Istanbul, 34810, T\"{u}rkiye (email: huseyinarslan@medipol.edu.tr). M. B. Janjua is with the R\&D Department, Oredata, Istanbul, T\"{u}rkiye (email: bilal.janjua@oredata.com). \c{C}. \"{O}. Etemo\u{g}lu is with the T\"{u}rk Telekom R\&D Department, Istanbul, T\"{u}rkiye (email: cagriozgenc.etemoglu@turktelekom.com.tr).}  
}

% The paper headers
%%\markboth{Journal of \LaTeX\ Class Files,~Vol.~14, No.~8, August~2021}%
%%{Shell \MakeLowercase{\textit{et al.}}: A Sample Article Using IEEEtran.cls for IEEE Journals}

% Remember, if you use this you must call \IEEEpubidadjcol in the second
% column for its text to clear the IEEEpubid mark.

\maketitle
\begin{abstract}
This study presents a framework designed to mitigate direct-link interference (DLI) and inter-backscatter device interference (IBDI) in multi-backscatter orthogonal frequency division multiplexing (OFDM)-based symbiotic radio (SR) systems. The framework employs OFDM signal designs with strategic allocation of null subcarriers and incorporates two backscatter modulation techniques: on-off frequency shift keying (OFSK) and multiple frequency shift keying (MFSK) for symbiotic backscatter communication (SBC). Additionally, we propose Fully-Orthogonal and Semi-Orthogonal multiple access schemes to facilitate SBC alongside primary communication. The Fully-Orthogonal scheme maintains orthogonality between direct link and SBC signals, thereby ensuring interference-free SBC, albeit at a reduced spectral efficiency. In contrast, the Semi-Orthogonal schemes eliminate IBDI but permit partial DLI, striking a balance between reliability and spectral efficiency. To address the partial DLI inherent in Semi-Orthogonal schemes, successive interference cancellation (SIC) is employed at the receiver, enhancing SBC reliability. To tackle channel estimation challenges in SBC within the SR system, we implement non-coherent detection techniques at the receiver. The performance of the proposed system is evaluated based on average bit error rate (BER) and sum-rate metrics, demonstrating the effectiveness of our schemes. We provide analytical results for the system's detection performance under both proposed modulation techniques and multiple access schemes, which are subsequently validated through extensive simulations. These simulations indicate a notable error-rate reduction of up to $10^{-3}$ at $20$ dB with the Fully-Orthogonal scheme with MFSK.
\end{abstract}
\begin{IEEEkeywords}
direct-link interference (DLI), inter-backscatter
device interference (IBDI), null subcarrier, on-off frequency
shift keying (OFSK), orthogonal frequency division multiplexing
(OFDM).
\end{IEEEkeywords}
\vspace{-1 em}
\section{Introduction}
\IEEEPARstart{T}{he} rapid evolution of \ac{IoT} technologies has led to an unprecedented increase in the number of connected devices, with projections estimating 30 billion devices by 2027, as highlighted in the Ericsson Mobility Report \cite{ref1}. This surge presents significant challenges in spectrum allocation, energy consumption, and network scalability in wireless communication networks. To address these issues, significant efforts are being made to develop spectrally and energy-efficient communication techniques for next-generation sustainable \ac{IoT} systems, which either operate without batteries or harvest energy from ambient sources \cite{ref2,ref3,ref4}. In this pursuit, \ac{3GPP}, through \ac{5G-NR} standards in Release 18 and Release 19, introduces passive \ac{IoT} or ambient power-enabled \ac{IoT}, which focuses on ultra-low-power communication technologies such as ambient \ac{BC} \cite{ref5}. Unlike conventional \ac{BC}, which is realized through \acp{RFID} that rely on dedicated \ac{RF} exciters\cite{ref6}, ambient \ac{BC} eliminates this dependency by leveraging ambient signals from TV towers, cellular \ac{BS}, Wi-Fi access points, and AM/FM transmitters as carrier signals on which a \ac{BD} passively modulates its information signal \cite{ref7, ref8, ref9}. However, in ambient \ac{BC}, the signal transmitted by \ac{BD} faces \ac{DLI} from the primary signal and \ac{IBDI} from other \acp{BD} signals at the receiver, either unintentionally or in direct competition for \ac{RF} resources \cite{ref10}. The impact of \ac{IBDI} becomes more severe in dense \acp{BD} deployment because uncoordinated transmissions can cause collisions, resource contention, and overall system performance degradation. 

Since most modern wireless systems, including cellular and Wi-Fi networks, rely on \ac{OFDM} as the standard waveform, this presents an opportunity for the \ac{BC} to utilize \ac{OFDM} as the signal source at the \ac{BS} while leveraging its features to address the \ac{DLI} and \ac{IBDI} problems. Several approaches are available in the literature that exploit \ac{OFDM} features in different domains to solve the interference problem in ambient \ac{BC}. In \cite{ref15}, authors apply the subcarrier-wise \ac{BC} concept to transmit one information bit per subcarrier cluster, applying a notch filter bank to attenuate signals over clusters dedicated to \ac{BC} while leaving the rest unchanged. The requirement for a notch filter increases both hardware and computational complexity at the \ac{BD}. In \cite{ref16,ref17,ref18}, authors propose to use the uncorrupted portion of the \ac{CP} to mitigate \ac{DLI}. However, this approach requires the receiver to have an accurate knowledge of the maximum channel delay, which may not always be feasible in practical scenarios. In \cite{ref19,ref20,ref21}, the authors utilize preambles and pilots in \ac{OFDM} for \ac{BC}. However, repurposing preambles and pilots for \ac{BC} may compromise synchronization and channel estimation of the primary signal transmission link, especially in dynamic environments. In \cite{ref22,ref23,ref24} the authors apply the \ac{FSK} modulation technique to modulate the \ac{BC} signal in the edge subcarriers and guard bands, but this approach necessitates an additional filter to scan the adjacent channels, leading to increased hardware and computational complexity. While ambient \ac{BC} solutions utilizing \ac{OFDM} systems offer certain gains, their reliability is constrained by interference resulting from the lack of cooperation \cite{ref11, ref12}. Moreover, the challenge in ambient \ac{BC} systems lies not only in managing interactions between the \acp{BD} and the transmission of ambient signals but also in mitigating destructive competition among \acp{BD} for limited resources.

To overcome the limitations of ambient \ac{BC}, a new concept of \ac{SR} has been introduced, where the primary system (i.e., existing Wi-Fi, cellular, or any other wireless system) not only serves its users but also supports \ac{BC} by sharing its RF resources such as spectrum and energy by involving more coordinated approach. Unlike traditional ambient \ac{BC}, where \acp{BD} access the signals of the primary system blindly, \ac{SR} fosters a cooperative relationship by transforming \acp{BD} from strange entities into cooperative network components. The communication behavior of the \acp{BD} in SR varies based on their interaction with primary signal transmission and other \acp{BD}' transmissions, ranging from mutualistic, where \acp{BD} enhance \ac{Rx} reception with spatial and multi-user diversity gain with minimal interference, to commensal, where \acp{BD} benefit from the primary signal transmission system without any impact \cite{ref13, ref14}. Authors in \cite{ref25} propose \ac{OOK}, \ac{FSK}-1, and \ac{FSK}-2 schemes, in which \ac{BD} modulates information signal by shifting the incoming \ac{OFDM} signal onto pre-allocated null subcarriers for \ac{SR-BC}. However, this approach is limited to a single \ac{BD} and exhibits spectral inefficiency, particularly in the cases of \ac{FSK}-1 and \ac{FSK}-2. In \cite{ref26}, two multiple access schemes are proposed to enhance massive IoT connectivity in SBC, simultaneous access (SA), which allows multiple IoT devices to transmit simultaneously by backscattering the \ac{BS} signal, and selection diversity access (SDA), which selects the device with the strongest backscatter link to reduce interference. However, the SA scheme faces increased interference and signal degradation, whereas the SDA scheme may lead to inefficient resource utilization.  To this extent, the existing literature reveals a knowledge gap in addressing both \ac{DLI} and \ac{IBDI} in multi-\ac{BD} \ac{SR-BC} systems. Most solutions focus primarily on mitigating \ac{DLI}, particularly in single-\ac{BD} scenarios, with limited attention to \ac{IBDI}.

\subsection{Key Contributions}
In this work, we propose schemes for interference mitigation and spectral efficiency enhancements in a multi-\ac{BD} \ac{OFDM}-based SR system. The main contributions of this paper are summarized as follows:

\begin{itemize}
\item 
% Adaptive \ac{OFDM} signal design:
We propose \ac{OFDM} signal design based on a Fully-Orthogonal scheme for multiple access of \acp{BD} in the \ac{SR} system. Specifically, null subcarriers are left between the data subcarriers throughout the entire \ac{OFDM} symbol for orthogonal reception of \acp{BD} and direct link signals. Although this approach effectively mitigates the \ac{DLI} and \ac{IBDI}, reserving subcarriers for \ac{SR-BC} reduces the number of subcarriers available for the primary data transmission. Consequently, while interference-free \ac{SR-BC} is achieved, it comes at the cost of the spectral efficiency of \ac{SR} system.

\item  
%  Semi-Orthogonal-\ac{SR-BC}:
To improve the spectral efficiency of the system, we introduce a Semi-Orthogonal scheme for multiple access of \acp{BD}, where a limited number of null subcarriers are allocated for \ac{SR-BC}. This scheme provides more subcarriers for primary data transmission while intentionally allowing manageable \ac{DLI}. Taking advantage of inherent \ac{SR} resource sharing, we apply the \ac{SIC} at the receiver to cancel the \ac{DLI} from the \ac{BD} signal. Thus, efficiently exploiting \ac{BD} signal in the Semi-Orthogonal scheme.  

\item We introduce \ac{OFSK} and \ac{MFSK} modulation techniques at the \acp{BD} for modulating their data over the primary signal. For information transfer, \acp{BD} employs these modulation techniques to shift the data in the primary signal to subcarriers designated for the multiple access of the \acp{BD}. To avoid the complexities of channel estimation, a non-coherent detector is implemented at the receiver to detect the \acp{BD} information from the received signal. 

\item 
% Performance analysis: 
We demonstrate the efficacy of a non-coherent detector for both modulation techniques. We analyze the achievable sum-rate and error rates of \ac{SR} analytically and through comprehensive simulations. This analysis shows the reliability and spectral efficiency of the proposed schemes in \ac{SR} system. Additionally, we analyze the performance of our proposed schemes in the presence of the \ac{CFO}. To alleviate \ac{CFO} impairment, we introduce a compensation method aimed at minimizing detection errors and enhancing non-coherent detector performance.

\end{itemize}

{\em Organization:} The rest of the paper is organized as follows: Section II provides explanations of the design of the \ac{SR} system along with the foundational concepts used throughout the subsequent sections. Section III presents the proposed modulation techniques. Section IV presents detection procedures and performance analysis. In Section V, we present the simulation results and discuss the theoretical findings. Conclusions and observations are provided in Section VI.

{\em Notation:} $\mathbb{Z}^+$ represents the set of positive integers, $\mathcal{CN}(0, \sigma^2)$  denotes the circularly symmetric complex Gaussian distribution with zero mean and variance $\sigma^2$, while $\mathbb{E}[\cdot]$ represents the expectation of its argument over random variables. The probability of an event $A$ is denoted by $\Pr(A)$, while the conditional probability of $A$ given $B$ is expressed as $\Pr(A \mid B)$. The floor operator, which returns the greatest integer less than or equal to a given number, is denoted by $\lfloor \cdot \rfloor$. Additionally, the operator $\ast$ represents the linear convolution operation.

\section{Multi-BD OFDM-based SR System Model}
We consider a multi-\ac{BD} \ac{OFDM}-based \ac{SR} system in \figurename~\ref{fig:image1}, where a $P$ number of single-antenna \acp{BD} communicate to \ac{Rx} through the backscatter links denoted by $h_{\mathrm{b}1},h_{\mathrm{b}2}....,h_{\mathrm{b}P}$ by passively modulating their information signals into the reflections of the signal coming from the \ac{BS} through the forward links represented by $h_{\mathrm{f}1},h_{\mathrm{f}2}....,h_{\mathrm{f}P}$. Additionally, the \ac{BS} communicates directly with the \ac{Rx} through a direct link, denoted by $h_{\mathrm{d}}$, whereby all \acp{BD} are assumed to be low-power, low-cost, and low-complexity devices, characteristic of a typical passive \ac{IoT} system.

\begin{figure*}[t]
    \centering
    \includegraphics[width=0.9\linewidth]{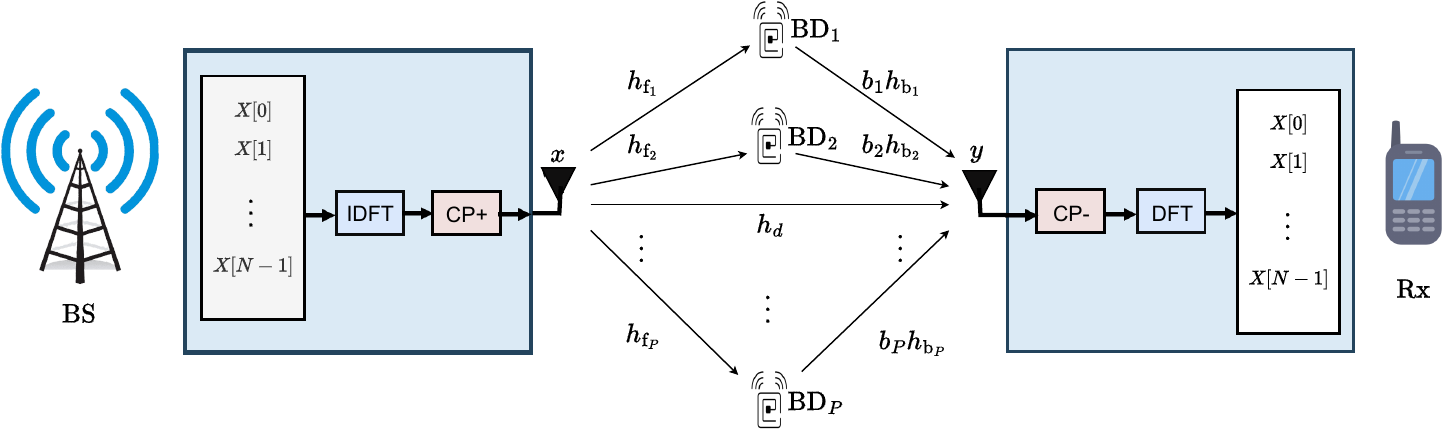}
    \caption{The proposed multi-BD OFDM-based symbiotic radio system model.}
    \label{fig:image1}
\end{figure*}

\subsection{Primary Signal}
The \ac{BS} transmits an \ac{OFDM} symbol containing $N$ subcarriers carrying the data symbols intended for the \ac{Rx}. The OFDM time-domain symbol is obtained through the Inverse Discrete Fourier Transform (IDFT), and the \textit{n}-th time-domain sample is expressed as
\begin{equation}
x[n] = \frac{1}{\sqrt{N}} \sum_{k=0}^{N-1} X[k] e^{j \frac{2\pi k n}{N}}, \quad 0 \leq n \leq N-1~,
\end{equation}
where also N is the IDFT size. After the IDFT operation, a {CP} is appended to the beginning of each time-domain symbol to generate the final time-domain \ac{OFDM} signal, which is then transmitted to the receiver after a passband up conversion. The CP length is chosen to be greater than the maximum excess delay of the channel (or delay spread) to eliminate \ac{ISI} when consecutive OFDM symbols are transmitted over the wireless channel. The passband signal transmitted by the \ac{BS} can be expressed as
\begin{equation}
\label{deqn_ex1a}
x(t) = x_{\text{cp}}(t) e^{j 2 \pi f_c t}~,
\end{equation}
 where $x_{\text{cp}}(t)$ is the baseband \ac{OFDM} signal with CP in the time domain, and $f_{c}$ denotes the carrier frequency.

\subsection{BD Signal}
\acp{BD} encode information by dynamically adjusting their load impedances in multiple discrete states, altering the impedance relationship between the antenna impedance ${Z_\mathrm{a}}_p$ and the load impedance ${Z_\mathrm{l}}_p$. These impedance variations create conditions of matching or mismatching, corresponding to the transmission of binary bits `0' and `1', respectively. When ${Z_\mathrm{a}}_p = {Z_\mathrm{l}}_p$, perfect impedance matching occurs, allowing maximum power transfer from the antenna to the load. On the other hand, when ${Z_\mathrm{a}}_p \neq {Z_\mathrm{l}}_p$, an impedance mismatch occurs, causing the maximum energy reflection from the load back to the antenna \cite{ref6} where it is exploited for \ac{BC}. The reflection coefficient $\alpha_p$, which governs how \acp{BD} encode information by controlling signal reflection and absorption, is mathematically expressed as \cite{ref27}:

\begin{equation} \label{deqn_ex1b} 
    \alpha_p = \frac{{Z_\mathrm{l}}_p - {Z_\mathrm{a}}_p^*}{{Z_\mathrm{l}}_p - {Z_\mathrm{a}}_p}= |\alpha_p| e^{j \theta_p}~. 
\end{equation}
By varying ${Z_\mathrm{l}}_p$ across different non-zero values while keeping ${Z_\mathrm{a}}_p$ constant, as dictated by the antenna structure, different values of $\alpha_p$ are generated. With, $\theta_p$ representing the phase shift that influences how the signal propagates or reflects in the communication channel. To further characterize the behavior of $\alpha_p$, the following equations are used \cite{ref28}:

\setcounter{equation}{4} 
\begin{equation} 
    |\alpha_p| = \frac{|{Z_\mathrm{a}}_p| + |{Z_\mathrm{l}}_p| - 2|{Z_\mathrm{a}}_p||{Z_\mathrm{l}}_p|\cos({\theta_\mathrm{a}}_p - {\theta_\mathrm{l}}_p)}{|{Z_\mathrm{a}}_p| + |{Z_\mathrm{l}}_p| + 2|{Z_\mathrm{a}}_p||{Z_\mathrm{l}}_p|\cos({\theta_\mathrm{a}}_p - {\theta_\mathrm{l}}_p)} \tag{\theequation a}~, 
\end{equation}

\begin{equation} 
    \theta_p = \arctan \left( \frac{2|{Z_\mathrm{a}}_p||{Z_\mathrm{l}}_p|\sin({\theta_\mathrm{a}}_p - {\theta_\mathrm{l}}_p)}{|{Z_\mathrm{a}}_p|^2 + |{Z_\mathrm{l}}_p|^2} \right) \tag{\theequation b}~, 
\end{equation} 
\setcounter{equation}{4}
where ${Z_\mathrm{a}}_p = |{Z_\mathrm{a}}_p| e^{j{\theta_\mathrm{a}}_p}$ and ${Z_\mathrm{l}}_p = |{Z_\mathrm{l}}_p| e^{j{\theta_\mathrm{l}}_p}$, with ${\theta_\mathrm{a}}_p$ and ${\theta_\mathrm{l}}_p$ representing their respective phase angles. The term ${\theta_\mathrm{a}}_p - {\theta_\mathrm{l}}_p$ denotes the phase difference between these impedances, and $\cos(\theta_a - {\theta_\mathrm{l}}_p)$ provides insight into the real part of their interaction.

To achieve a gradual transition in the reflection coefficient, \acp{BD} continuously modify the load impedance ${Z_\mathrm{l}}_p$, which can be fine-tuned by adjusting inductive elements \cite{ref9}. By alternating between multiple impedance states at different rates, results in the frequency shift of the incident \ac{OFDM} symbol from the \ac{BS} by the \acp{BD} which also accounts for the shift in subcarriers positions, as detailed in \cite{ref25}.

Traditional \ac{BD} performs modulation by switching between two discrete impedance states using rectangular pulses \cite{ref6}. However, the transitions between these states may introduce undesired frequency shifts, potentially causing the \ac{BD} to transmit in the out-of-band region of the primary communication system where it is highly probable to encounter interference from other systems' transmissions. To mitigate these undesired frequency shifts, pulse shaping techniques are employed, enabling continuous load variations through variable impedance elements controlled by a \ac{MCU}. By adjusting voltage levels applied to a diode, continuous load modulation is achieved, reducing out-of-band emissions and refining the spectral characteristics of the reflected signal. Some advanced \ac{BD} architecture further implements single-sideband \ac{FSK} to suppress higher-order harmonics and unwanted mirror frequencies \cite{ref29}.

In this context, we assume that \acp{BD} can generate complex exponential signals at specific frequencies and perform both single-sideband \ac{FSK} and \ac{OFSK} modulations to manipulate the time-frequency properties of the backscattered signal. Additionally, we assume that the \acp{MCU} in \acp{BD}, along with other essential low-power operations, are powered through ambient energy harvesting mechanisms such as RF or solar energy \cite{ref30}, \cite{ref31}. When the $\mathrm{BD}_p$ employs frequency modulation for data transmission, the passively modulated signal can be expressed as
\begin{equation} 
    x_{\mathrm{b}_p}(t) = (h_{\mathrm{f}p}(t) * x(t))\alpha_p(t)b_p(t)~, 
\end{equation}
where $b_p(t)$ represents the information signal generated at the $\mathrm{BD}_p$.

\subsection{Received Signal}
The signal received at the \ac{Rx} is the superposition of transmitted signals from the \acp{BD} and from  the \ac{BS}, which can be expressed as
\begin{equation}
y(t)= x(t) * h_{\mathrm{d}}(t) +\sum_{p=1}^P h_{\mathrm{b}p}(t) *x_{\mathrm{b}p}(t)+ w(t)~,
\end{equation}
where $ w(t) $ represents \ac{AWGN}, $ h_{\mathrm{d}}(t)$ and $h_{\mathrm{b}p}(t)$ denote the continuous-time channel responses for the direct link and the backscatter links of the $\mathrm{BD}_p$. All links exhibit multipath Rayleigh fading with maximum excess delay of the composite channel expressed as
$\tau_{\text{max}} = \max \left\{ \tau_\mathrm{d}, \tau_{\mathrm{f}p} + \tau_{\mathrm{b}p}, \dots, \tau_{fP} + \tau_{bP} \right\}$, where $\tau_\mathrm{d}$ represents the delay of the direct link, while $\tau_{\mathrm{f}p}$ and $\tau_{\mathrm{b}p}$ denote delays for forward and backscatter links, respectively. Each path follows multipath Rayleigh fading with maximum excess delay of the composite channel expressed as in the forward and backscatter links for the $\mathrm{BD}_p$, respectively, where $ p = 1, 2, \dots, P $. After down-converting the received signal and removing the \ac{CP}, the discrete-time representation of the received signal can be expressed as
\begin{equation}
y[n] = y_{\mathrm{d}}[n] + \sum_{p=1}^Py_{\mathrm{b}p}[n] + w[n]~,
\end{equation}
where $y_{\mathrm{d}}[n] = x[n] * H_{\mathrm{d}}[n] $ is the signal received from \ac{BS} via the direct link, $ y_{\mathrm{b}p}[n] = \alpha_p h_{\mathrm{b}p}[n] * ((h_{\mathrm{f}p}[n]*x[n])b_p[n]) $ is the signal received from the $\mathrm{BD}_p$ via a backscatter link $p$ and $ w[n] \sim \mathcal{CN}(0, \sigma_w^2) $ is the \ac{AWGN} with zero mean and variance $ \sigma_w^2 $.

In this co-existence scenario, both the direct link and backscatter link transmit the same signal. However, at the receiver, the direct link signal $y_\mathrm{d}$ appears significantly stronger, causing substantial interference to the backscattered signal $y_{\mathrm{b}_p}$. This power difference makes it challenging for the \ac{Rx} to reliably demodulate backscattered information. To mitigate this interference, time division multiplexing (TDM) and frequency division multiplexing (FDM) can be employed to allocate separate time slots and frequency bins between the primary signal and the \ac{BD} signal.

\section{Proposed Modulation Techniques and Multiple Access Schemes}

Since \ac{OFDM} distributes data symbols across multiple subcarriers in the frequency domain \cite{ref32}, \cite{ref33}, the impact of interference is well-suited for study at the subcarrier level. Building upon various interference mitigation techniques in wireless communications discussed in \cite{ref34}. In this work, we propose subcarrier shift-based techniques to manage interference in \ac{SR-BC}. Specifically, we propose the \ac{OFSK} and \ac{MFSK} modulation techniques at the \acp{BD}, which introduce frequency shifting to the incoming \ac{OFDM} signal, assigning the information signals from multiple \acp{BD} to subcarriers designated for \acp{BD} signal transmissions based on utilized multiple access scheme. In the Fully-Orthogonal scheme, all designated subcarriers are left null to fully accommodate the \acp{BD} signals, which significantly reduces interference and enhances the reliability of \ac{SR-BC}. However, it comes at the cost of spectral inefficiency. To enhance spectral efficiency, we apply a Semi-Orthogonal scheme, where only a few number of the designated subcarriers are left null while the remaining subcarriers continue to be used for primary data transmission. Under this scheme, the frequency shift induced by the applied modulation technique at the \ac{BD}, shifts the \ac{BD} signal to both null and non-null designated subcarriers. \acp{BD} signals encounter \ac{DLI} at non-null subcarriers while experiencing no interference at designated null subcarriers. The presence of \ac{DLI} at non-null subcarriers arises from the occupancy of the direct link signal within those subcarriers. Although this approach preserves spectral efficiency by offering a larger number of subcarriers for primary signal transmission, it introduces partial \ac{DLI} which significantly degrades detection performance and affects the reliability of \ac{SR-BC}. In this context, we reserve $N_b=\sum_{p=1}^P N_{{bp}}$ as the number of null subcarriers for the \acp{BD} signal transmissions and $N_\mathrm{d}$ as the number of data subcarriers for primary data transmissions dedicated for \ac{Rx}, where $N_{{bp}}$ represents the subcarriers allocated to $\text{BD}_p$.

\subsection{Proposed SBC Modulation Schemes}
\subsubsection{OFSK modulation}
 \ac{OFSK} is an \ac{OOK}  with frequency-shifting features, which conventionally, in \ac{SR-BC} and \ac{BC} in general, it is implemented by switching the antenna between reflecting and non-reflecting states, thereby altering the backscatter signal. When transmitting the information bit `1', \ac{BD} alternates its antenna impedances at the rate at which \ac{OOK} performs a particular frequency shift in the incoming \ac{OFDM} symbol. Conversely, when transmitting information bit `0', the \ac{BD} maintains its antenna impedance in a fixed state, resulting in no change in the frequency shift in the incoming \ac{OFDM} symbol. Thus, it is referred to as \ac{OFSK}. For this modulation technique, the waveform of the $\mathrm{BD}_p$ signal can be expressed as

\begin{equation}
b_p[n] =
\begin{cases} 
\alpha_p, & \text{if } b =  0 \\
\alpha_pe^{j 2\pi f_p n}, & \text{if } b =  1
\end{cases}~,
\end{equation}
where $f_p$ is the frequency shift corresponding to $p$ subcarrier shift. 

\subsubsection{MFSK modulation Scheme}
To enhance the reliability of multi-BD \ac{OFDM}-based \ac{SR-BC}, we introduce \ac{MFSK}, a multiple single-side multi subcarrier \ac{FSK} scheme. This technique utilizes multiple subcarriers to modulate the information signals of different \acp{BD} per pair of distinct subcarriers in such a way that to transmit the information bit `0', the $\mathrm{BD}_p$ performs $p$ subcarrier shift, conversely to transmit the information bit `1', it performs $p+1$ subcarrier shift towards the single side. For this modulation technique, the waveform of the $\mathrm{BD}_p$ signal can be expressed as
\begin{equation}
    b_p [n] =
\begin{cases} 
e^{j \frac{2\pi f_{p} n}{N}}, & \text{if } b =  0 \\
e^{j \frac{2\pi f_{p+1} n}{N}}, & \text{if } b =  1
\end{cases}~,
\end{equation}
where $f_{p+1}$ is the frequency shift corresponding to $p+1$ subcarrier shift.

\subsection{Fully-Orthogonal Schemes }
\subsubsection{Fully-Orthogonal Scheme with OFSK}
We design the \ac{OFDM} signal with $P$ null subcarriers allocated after each data subcarrier throughout the \ac{OFDM} symbol to facilitate separate signal transmission from each \ac{BD}. Each \ac{BD} applies a distinct frequency shift to the incoming \ac{OFDM} signal which introduces the subcarrier shift, in such a way that when transmitting an information bit ‘1,’ $\mathrm{BD}p$ shifts the signal by $p$ subcarriers, while $\mathrm{BD}{p+1}$ shifts it by $p+1$ subcarriers. As a result, at the \ac{Rx}, the signals from different \acp{BD} appear on distinct subcarriers. Conversely, when transmitting an information bit ‘0,’ $\mathrm{BD}p$ forwards the incoming \ac{OFDM} signal to the \ac{Rx} with an attenuation factor of $\alpha_p$, likewise for $\mathrm{BD}{p+1}$. This technique ensures orthogonality between the \ac{BS} and \acp{BD} signals at the \ac{Rx}, as illustrated in \figurename~\ref{fig:image2}. The subcarrier allocation for this scheme can be expressed as

\begin{equation}
X[k] = 
\begin{cases} 
S[m],  &k = (P+1)m  \\
0, & \text{otherwise}
\end{cases}~,
\end{equation}
where $ m \in \mathbb{Z}^+ $ with $ N_d = \lfloor \frac{N-1}{P+1}\rfloor$, and $ N_{{bp}} = N_d$.

\begin{figure}[t]
    \centering
    \hspace{ 0 cm}
    \includegraphics[width=0.95\linewidth]{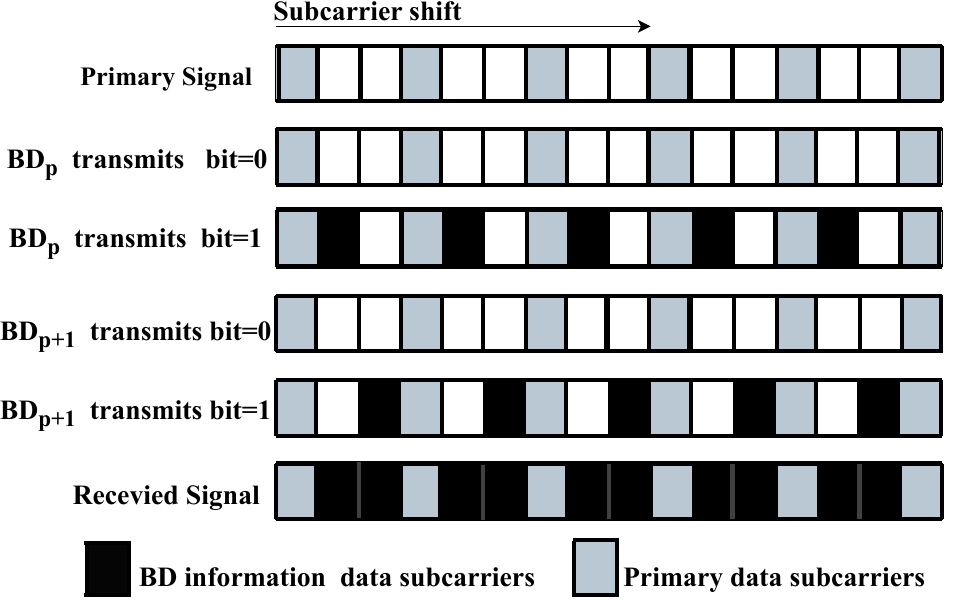}
    \caption{The Fully-Orthogonal scheme with the OFSK modulation.}
    \label{fig:image2}
\end{figure}

\subsubsection{Fully-Orthogonal Scheme with MFSK}
We design the OFDM signal with $2P$ null subcarriers assigned after each data subcarrier throughout the entire \ac{OFDM} symbol for each \ac{BD} to transmit its information. Each \ac{BD} is allocated to a pair of null subcarriers out of the total number of subcarriers between two data subcarriers in a sequence. Each \ac{BD} shifts the incoming OFDM signal to a designated pair of null subcarriers based on its information bit to be transmitted. Specifically, when transmitting bit `0', $\mathrm{BD}_p$ performs a frequency shift corresponding to the first subcarrier of its designated pair of null subcarriers, meanwhile, $\mathrm{BD}_{p+1}$ shifts to the first subcarrier of its designated pair of subcarrier. Conversely, when transmitting bit `1', $\mathrm{BD}_p$ performs a frequency shift corresponding second subcarrier in the designated pair of subcarriers, whereas, $\mathrm{BD}_{p+1}$ shifts to the second subcarrier in designated pair of its null-subcarrier as illustrated in \figurename~\ref{fig:image4}.  
%with $P = 2$, the $\text{BD}_1$ utilizes the first pair of designated subcarriers and shifts the OFDM symbol by one subcarrier to transmitting bit `0', where $\text{BD}_2$ utilizes the second pair and shifts the OFDM symbol by three subcarriers. To transmit bit `1', $\text{BD}_1$ utilizes the first pair of designated subcarriers and shifts the OFDM symbol by three subcarriers, where $\text{BD}_2$ utilizes the second pair and shifts the OFDM symbol by four subcarriers. 
Although this technique reduces interference in the received signal, it compromises spectral efficiency. The allocation of subcarriers for this scheme can be expressed as

\begin{equation}
    X[k] =
\begin{cases} 
S[m]~, & k = (2 P+1)m \\
0~, & \text{otherwise}
\end{cases}~,
\end{equation}
where $ m \in \mathbb{Z}^{+} $, and $ N_d =\lfloor \frac{N}{2P+1}\rfloor$, and $N_{{bp}} =  2N_d$.

\begin{figure}[t]  
    \centering
    \hspace{0 cm}
   \includegraphics[width=0.90\linewidth]{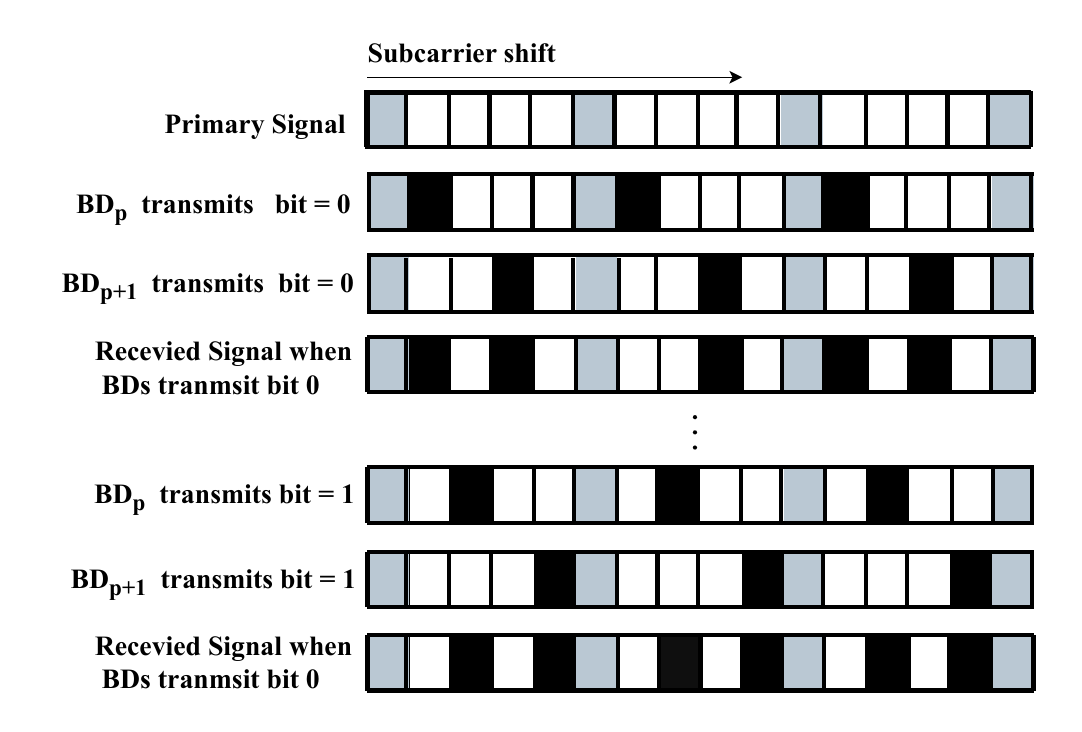} 
    \caption{The Fully-Orthogonal scheme with the MFSK modulation.}
    \label{fig:image4}
\end{figure}

\subsection{Semi-Orthogonal Schemes}

\subsubsection{Semi-Orthogonal Scheme with OFSK}
In this method, we allocate a set of $P$ null subcarriers only after the first primary data subcarrier for each of the \acp{BD} to transmit the \ac{BD} signals in order. Specifically, $\mathrm{BD}_p$ is assigned to the null subcarrier $p$, $\mathrm{BD}_{p+1}$ to the null subcarrier ($p+1$).....$\mathrm{BD}_{P}$ to $P$. For instance, in \figurename~\ref{fig:image3}, $P$ = 2, we pre-allocate $p$ = 1, $(p+1)= P = 2$, null subcarriers after the first primary data subcarrier. This method optimizes spectrum resource utilization by minimizing the number of null subcarriers, offering more support for primary data transmission. However, the subcarrier shifting to non-null designated subcarrier positions results in overlapping between the primary and \ac{BD} signals which introduces partial \ac{DLI}. The subcarrier allocation for this scheme can be expressed as

\begin{equation}
    X[k] =
\begin{cases} 
S[m], & k = m~, \quad m = 0\\
S[m], & k = (P+m), \quad m \geq 1 \\
0, & \text{otherwise}
\end{cases}~,
\end{equation}
where $ m \in \mathbb{Z}^{+}, N_d=N-P$, and $N_{{bp}} = 1$.

\begin{figure}[t]  
    \centering
    \hspace{0 cm}
    \includegraphics[width=0.95\linewidth]{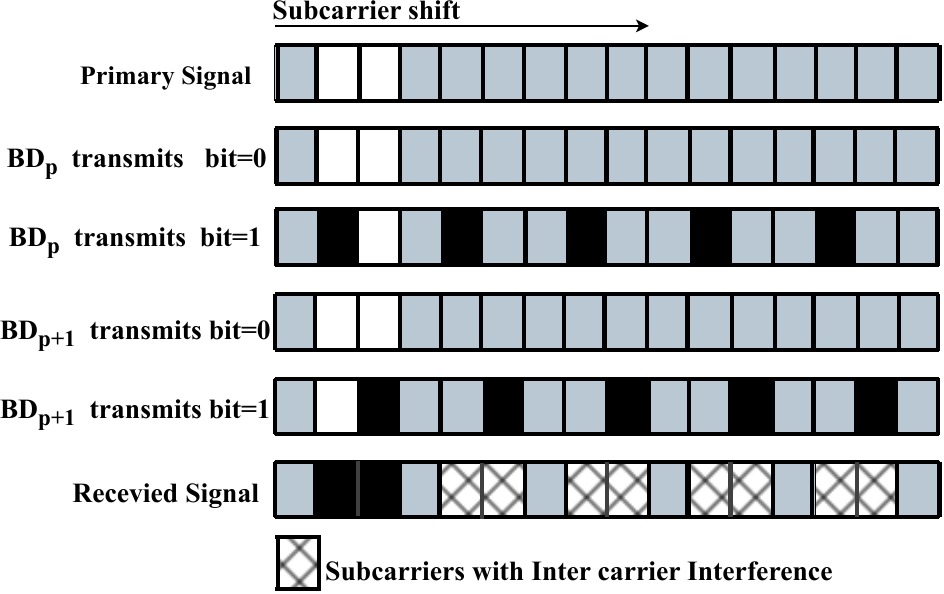} 
    \caption{ The Semi-Orthogonal scheme with OFSK modulation.}
    \label{fig:image3}
\end{figure}

\subsubsection{Semi-Orthogonal Scheme with MFSK}
To enhance the spectral efficiency of \ac{MFSK} modulation, we allocate $2P$ null subcarriers only after the first primary data subcarrier, allowing each \ac{BD} to transmit the \ac{BD} signal within a limited number of designated subcarriers. For instance, for two \acp{BD} case in \figurename~\ref{fig:image5} only four null subcarriers are allocated after the first primary data subcarrier while the remaining subcarriers are reserved for primary data transmission. Although this scheme enhances spectral efficiency, it introduces partial \ac{DLI}, as the shifted \ac{BD} signal may overlap with the primary signal transmission in the non-null designated subcarriers. The allocation of subcarriers for this scheme can be expressed as
\begin{equation}
   X[k] =
    \begin{cases} 
S[m], & k = m~, \quad m = 0\\
S[m], & k = (2P+m), \quad m \geq 1\\
0, & \text{otherwise}
    \end{cases}~,
\end{equation}
where $ m \in \mathbb{Z}^{+}$ with $N_d=N-2P$, and $N_{{bp}}=2$.

\subsubsection{Resolving Partial-Interference}

\section{Detection and Performance Analysis}
In \ac{SR-BC}, the received signal is a superposition of the primary signal directly from the \ac{BS} and the signals transmitted from multiple \acp{BD}. At the \ac{Rx}, \ac{CP} is removed, followed by a discrete Fourier transform (DFT) over the $N$ points. The frequency domain representation of the received signal can then be expressed as
\begin{equation}
\begin{aligned}
        Y[k] = H_{\mathrm{d}}[k]X[k] + \sum_{p=1}^P&\alpha_p (H_{\mathrm{b}p}[k]H_{\mathrm{f}p}[k] X[k]) \\
        &*B_p[k] + W[k]~,
\end{aligned}
\end{equation}
where $Y[k]$ denotes the \ac{OFDM} received symbol on the $k$-th subcarrier while terms $X[k]$, $H_{_\mathrm{d}}[k]$, $H_{\mathrm{b}p}[k]$, $H_{\mathrm{f}p}[k]$ and $W[k]$ represent the frequency domain equivalents of $x[n]$, $h_{_\mathrm{d}}[n]$, $h_{\mathrm{b}p}[n]$, $h_{\mathrm{f}p}[n]$ and $w[n]$, respectively. $B_p[k]$ is the \ac{DFT} of the exponential function $b_p[n]$, which is given by $\delta{(k-p)}$. By performing a convolution operation, the above equation is reduced to
\begin{equation}
\label{Eq:15}
\begin{aligned}
        Y[k] = H_{\mathrm{d}}[k]X[k] + \sum_{p=1}^P&\alpha_pH_{\mathrm{b}p}[k-p]\\
        &H_{\mathrm{f}p}[k-p] X[k-p] + W[k]~.
\end{aligned}
\end{equation}
In the received signal, the primary and \acp{BD} signals can be expressed separately as
\begin{align}
    \hat{Y}_{\mathrm{d}}[\hat{k}_\mathrm{d}] &= H_{\mathrm{d}}[\hat{k}_\mathrm{d}] X[\hat{k}_\mathrm{d}] + W_{\mathrm{d}}[\hat{k}_\mathrm{d}]~, \quad \hat{k}_\mathrm{d} \in \mathcal{K}_\mathrm{d}~, \\
    \hat{Y}_{\mathrm{b}p}[\tilde{k}_p] &= Y[\tilde{k}{_p}] + W_{\mathrm{b}p}[\tilde{k}_p] ~,\quad \tilde{k}_p\in \mathcal{K}_{b_p}~,
\end{align}
where $ \hat{Y}[\tilde{k}{_p}] =\alpha_p H_{\mathrm{f}p}[\tilde{k}_p] X[\tilde{k}_p] H_{\mathrm{b}p}[\tilde{k}_p]$ while $\hat{k}_\mathrm{d}$ and $\tilde{k}_{p}$ represent the subcarriers occupied by the primary signal and \ac{BD} signal from the $\mathrm{BD}_p$, respectively. According to the modulation process in all schemes, $X[\tilde{k}]$ can generally be expressed as
\begin{equation}
X[\tilde{k}] =
\begin{cases} 
0~, & b[n] = 0~, \\ 
\alpha_p X_{\mathrm{d}}[{k}_{p} -(p+q)]~, & b[n] = e^{j 2\pi f_{p+q} n / N}~.
\end{cases}
\end{equation}
where $ q \in \mathbb{Z}^{+}$.

\begin{figure}[t]  
    \centering
    \hspace{ 0 cm}
   \includegraphics[width=0.95\linewidth]{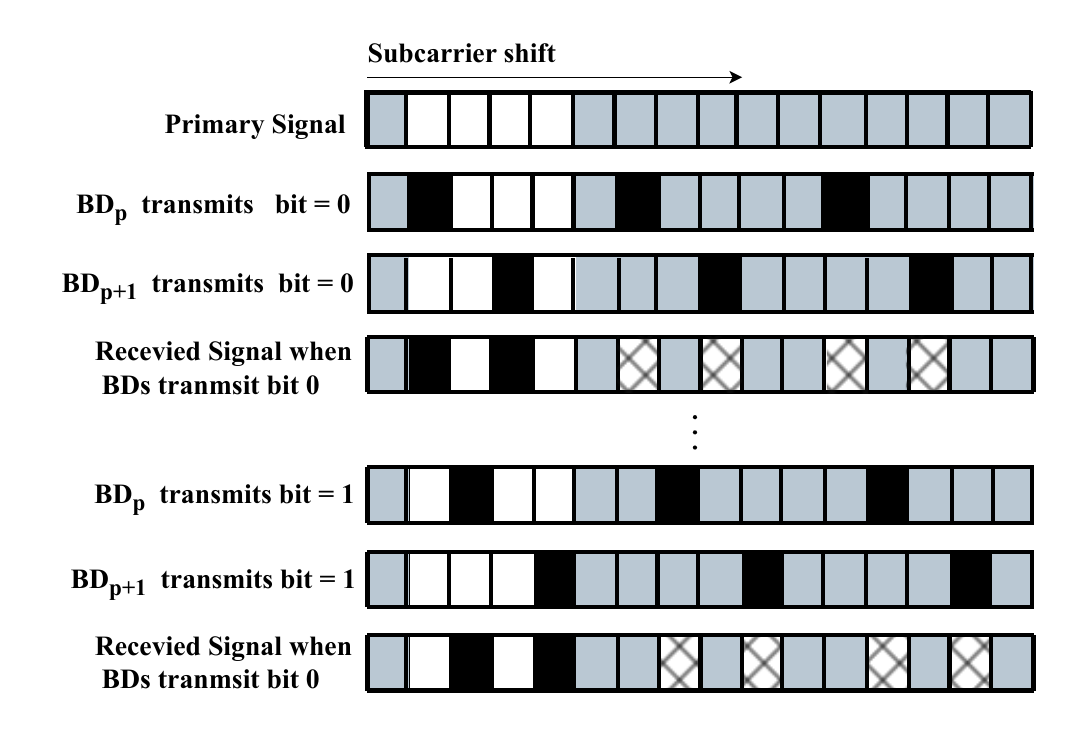} 
    \caption{The Semi-Orthogonal scheme with the MFSK modulation.}
    \label{fig:image5}
\end{figure}

\subsection{Non-coherent Detection for OFSK}
During reception, the \ac{Rx} performs conventional coherent detection to decode its dedicated data, while simultaneously performing non-coherent detection to identify \ac{BD} signal. The non-coherent detection process does not require \ac{Rx} to have phase information or any channel state information \cite{ref35}. This makes the detection of \acp{BD} signals less complex, simpler, and computationally efficient, since it eliminates the need for complex channel estimation. This approach is well-suited for \ac{SR-BC}. The \ac{Rx} detects the \acp{BD} signals in the designated subcarriers by measuring the presence of backscattered energy, which is analyzed by statistical distribution. As the backscattered \ac{OFDM} signal is equivalent to the shifted \ac{OFDM} symbol, therefore, for a large number of $N$, the corresponding signal, $Y_{\mathrm{b}p}[\tilde{k}_p]$, follows a complex Gaussian distribution with circularly symmetric mean and variance, $\sigma^2_{Y_{\mathrm{b}p}} = |\alpha_p|^2 \sum_{\tilde{k}_p=0}^{N_b-1} |H_{\mathrm{f}p}[\tilde{k}_p]|^2 |H_{\mathrm{b}p}|^2$. The \ac{SR-BC} signal is expressed as

\begin{equation}
\hat{Y}_{bp}[\tilde{k}{_p}] =
\begin{cases} 
W_{\mathrm{b}_p}{[\tilde{k}_p]}, & b_p =  0 \\
{Y}_{bp}[\tilde{k}{_p}] + W_{\mathrm{b}p}{[\tilde{k}_p]}, & b_p =  1
\end{cases}~,
\end{equation}
where $W_{\mathrm{b}p}{[\tilde{k}_p]} \sim \mathcal{C}\mathcal{N}(0, \sigma^2_{W_{\mathrm{b}p}})$. In this context, two hypotheses for the likelihood of detection of $\mathrm{BD}_p$, are tested using a test statistic which is based on the squared sum of the received signal across the designated subcarriers. This test statistic is expressed as

\begin{equation}
\label{Eq:20}
{R_p} = \sum_{\tilde{k}_p=0}^{N_\mathrm{b}-1}| \hat{Y}_{bp}[\tilde{k}{_p}]| ^2~,
\end{equation}

\begin{itemize}
    \item \text{Probability distribution under $H_0$, $Pr({R_p}|B_p=0)$}
    
The test statistic $R_p$ follows a Gaussian noise distribution, $W_{bp}[\tilde{k}_p] \sim \mathcal{CN}(0, \sigma^2_{W_{bp}})$, indicating the absence of backscatter signal from $\mathrm{BD}_p$ which corresponds to the transmission of an information bit ‘0’, where the receiver detects only noise. The test statistic $R_p = R_p^0$, representing the noise energy, modeled as a complex circularly symmetric Gaussian variable with independent, identically distributed (i.i.d.) real and imaginary parts, each with a mean of $0$ and variance $\sigma^2_{W_{bp}}$. Consequently, $|W_{\mathrm{b}p}[\tilde{k}_p]|^2$ follows an exponential distribution. The energy $R_p^0$ is the sum of $N_\mathrm{b}$ independent exponential random variables. Its probability density function (PDF) is challenging to compute directly, as it requires performing $N_\mathrm{b} -1$ convolution operations, equivalent to computing $N_\mathrm{b} -1$ integrals. For a given ${R_p} = {R}_0$, the characteristic function of $|{W}_{bp}{[\tilde{k}_p]}|^2$ is given by
\begin{equation}
\Phi_{R_p^0}(t) =\mathbb{E} \left[ e^{itx} \right] =\prod_{\tilde{k}_{p}=1}^{N_\mathrm{b}} \frac{1}{1 - it\lambda_{\tilde{k}_{p}}^{-1}}~,
\end{equation}
where $\lambda_{\tilde{k}_{p}}^{-1} = 2\sigma_{w_{\mathrm{b}p},{\tilde{k}{_p}}}^2$.
Then the \ac{PDF} of ${R_p^0}$, i.e., $Pr(R_p|B_p=0)$ is obtained by taking the inverse Fourier transform.

\begin{equation}
f_{R_p^0}(x) =  \int_{-\infty}^\infty \frac{1}{2\pi jt}\Phi_{R_p^0}(t) e^{-jtz} \, dt~.
\end{equation}

\item \text{Probability distribution  under $H_1$, $Pr({R_p}|B_p=1)$}

In this case, the test statistic ${R_p}$ = ${R_p^1}$, is influenced by both noise and the backscattered energy from the $\mathrm{BD}_p$,  which corresponds to the transmitted information bit `1'. The received signal depends on the two cascaded channels forward channel and backscatter channel, denoted by $h_{\mathrm{f}p}[n]$  and $h_{\mathrm{b}p}[n]$, respectively.
\end{itemize}

\begin{lemma}
\label{lemma:1}
 Due to the short distance between $\mathrm{BD}_p$ and the \ac{Rx} hold assumption that, the signal through $h_{\mathrm{b}p}[n]$ is transmitted through a single path and remains constant during one \ac{OFDM} symbol, meaning that the cascaded channels $h_{\mathrm{b}p}[n]*h_{\mathrm{f}p}[n]$ can be approximated constant. For each backscatter link, the squared magnitude $|h_{\mathrm{b}p}|^2=v_{p}^2$ each follows a Rayleigh distribution. The energy $ R_p = {R_p^1} $ includes the noise and the backscatter-modulated signal, where the characteristic function of $|{W}_{bd}{[\tilde{k}_p]}|^2$ under the given $ v_{p}$.
\begin{equation}
\Phi_{{R_p^1}|v_p}(t) = \prod_{\tilde{k}{_p}=1}^{N_\mathrm{b}} \frac{1}{1 - it\lambda_{\tilde{k}{_p}}^{-1}}~,
\end{equation}
\textit{where}
\begin{equation}
\lambda_{\tilde{k}{_p}}^{-1} = 2 \alpha_\mathrm{b} v_{p}^2 \sum_{\tilde{k}{_p}=0}^{N_\mathrm{b}-1} \left(\sigma_{h_{p},\tilde{k}_p}^2 + \sigma_{w_{\mathrm{b}p}}^2 \right)~,
\end{equation}
\textit{$ \sigma_{h_{p},{\tilde{k}{_p}}}^2 $ is the power of the cascaded channel. The PDF is then obtained as}
\begin{equation}
f_{{R_p^1}|v_{p}}(x) = \int_{-\infty}^{\infty} \frac{1}{2 \pi j t} \Phi_{{R_p^1}|v_{p}}(t) e^{-j t z} dt~.
\end{equation}
\end{lemma}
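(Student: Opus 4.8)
The plan is to obtain the conditional law of the decision statistic $R_p^1$ by tracing it back to the frequency-domain received signal \eqref{Eq:15}, restricted to the subcarriers occupied by $\mathrm{BD}_p$ under hypothesis $H_1$. First I would invoke the modelling assumption of the lemma: since $\mathrm{BD}_p$ lies close to the \ac{Rx}, the backscatter response $h_{\mathrm{b}p}[n]$ collapses to a single tap that stays essentially constant over one \ac{OFDM} symbol, so on every occupied subcarrier the cascaded response factors as $\alpha_p H_{\mathrm{b}p} H_{\mathrm{f}p}[\tilde k_p]$ with $|H_{\mathrm{b}p}|^2 = v_p^2$ and $v_p$ Rayleigh-distributed. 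Because a frequency-shifted \ac{OFDM} symbol is again an \ac{OFDM} symbol, the large-$N$ argument already used above to assign $\sigma^2_{Y_{\mathrm{b}p}}$ applies verbatim: conditioned on the channel realizations, the samples $\hat Y_{\mathrm{b}p}[\tilde k_p]$ are zero-mean circularly symmetric complex Gaussian and mutually independent across the $N_\mathrm{b}$ designated subcarriers.

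Next I would compute the per-subcarrier variance parametrizing these Gaussians. Conditioned on $v_p$, the signal component on subcarrier $\tilde k_p$ has power proportional to $|\alpha_p|^2 v_p^2 |H_{\mathrm{f}p}[\tilde k_p]|^2$ times the data power, added to backscatter-path noise of variance $\sigma^2_{W_{\mathrm{b}p}}$; averaging $|H_{\mathrm{f}p}[\tilde k_p]|^2$ over the Rayleigh forward link replaces it by the tap power $\sigma^2_{h_p,\tilde k_p}$, and collecting the reflection gain, the factor $2$ from the per-component variance convention, $v_p^2$, and the noise term gives exactly the parameter $\lambda_{\tilde k_p}^{-1}$ in the statement, which is the mean of the exponential random variable $|\hat Y_{\mathrm{b}p}[\tilde k_p]|^2$. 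Each such magnitude-squared term therefore has conditional characteristic function $\mathbb E[e^{it|\hat Y_{\mathrm{b}p}[\tilde k_p]|^2} \mid v_p] = (1 - it\lambda_{\tilde k_p}^{-1})^{-1}$.

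Then, since \eqref{Eq:20} writes $R_p^1$ as the sum over $\tilde k_p = 0,\dots,N_\mathrm{b}-1$ of these independent exponentials, and the characteristic function of a sum of independent variables is the product of the individual ones, I obtain $\Phi_{R_p^1 \mid v_p}(t) = \prod_{\tilde k_p=1}^{N_\mathrm{b}} (1 - it\lambda_{\tilde k_p}^{-1})^{-1}$, which is the asserted expression. Applying the Fourier inversion theorem to this characteristic function then yields the stated integral representation of the conditional \ac{PDF} $f_{R_p^1 \mid v_p}(x)$; a closed form would require the $N_\mathrm{b}-1$ convolutions mentioned earlier, so the inversion integral is kept as the final answer, with the remaining averaging over the Rayleigh variable $v_p$ deferred to the unconditional analysis.

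The step I expect to be the main obstacle is making the ``conditionally Gaussian and independent across subcarriers'' claim precise: the frequency-domain samples of a finite-length shifted \ac{OFDM} block are only asymptotically Gaussian and only approximately uncorrelated, so the argument relies on the large-$N$ regime together with the subcarrier shift spreading the occupied bins apart; I would state this as the operating assumption, consistent with the paragraph preceding the lemma, rather than prove a central-limit statement. A secondary, essentially bookkeeping difficulty is keeping the expectation over the forward-link taps separate from the conditioning on $v_p$, so that all constants aggregate into precisely the $\lambda_{\tilde k_p}^{-1}$ written in the statement.
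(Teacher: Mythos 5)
Your proposal is correct and follows essentially the same route as the paper's proof: treat the per-subcarrier energies as independent exponential random variables, take the product of their characteristic functions to get $\Phi_{R_p^1\mid v_p}(t)$, and recover the conditional PDF by Fourier inversion. If anything, your write-up is more complete than the appendix, which works with the auxiliary sum $u_p=\sum_m |h_{\mathrm{f}p}[m]|^2$ and leaves implicit the link to the actual test statistic $R_p^1$ that you make explicit via the conditional Gaussianity of $\hat{Y}_{\mathrm{b}p}[\tilde{k}_p]$ and the aggregation of the signal-plus-noise constants into $\lambda_{\tilde{k}_p}^{-1}$.
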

\begin{proof}
The proof of Lemma~\ref{lemma:1} is given in Appendix~\ref{prof:lemma:pdfOFSK}.
\end{proof}

\paragraph{Probability of Error}
We measure and evaluate the performance of the non-coherent detector in terms of the probability of error ($P_{e}$), defined as the sum of the average \ac{PFA} and average \ac{PMD}. For the $P$ \acp{BD}, this probability of error can be expressed as  
\begin{align}
P_{e} &= \frac{1}{P} \sum_{p=1}^{P}\left(  \frac{1}{2}\text{PMD}_p 
+ \frac{1}{2}\text{PFA}_p \right)~. 
\end{align}
Considering equal prior probabilities for the hypotheses $H_0$ and $H_1$, the probability of false alarm ($\text{PFA}_p$), which represents the likelihood of detecting a signal when $B_p = 0$ for a given threshold value $\gamma$, is expressed as  
\begin{align}
\text{PFA}_p(\gamma) &= Pr({R_{p}} > \gamma \mid B_p= 0)~, \notag\\
             &= 1 - Pr({R_{p}} \leq \gamma \mid B_p= 0) = 1 - F_{R_{p}}(\gamma)~,
\end{align}
where $F_{R_{p}}(\gamma)$ is the cumulative distribution function (CDF) of ${R_p}$. Similarly, the probability of missed detection ($\text{PMD}_p$), which represents the likelihood of failing to detect a signal when $B_p = 1$, is given by  
\begin{equation}
\text{PMD}_p(\gamma) = Pr({R_{p}} \leq \gamma | B_p= 1) = F_{R_{p}}(\gamma)~,
\end{equation}
where $F_{R_{p}}(\gamma)$ is the same CDF of $R_p$, but evaluated under the alternative hypothesis.

\paragraph{Threshold Optimization}  
The decision threshold ($\gamma$) determines the balance between $\text{PFA}_p$ and $\text{PMD}_p$. It is chosen to minimize the probability of error $P_{e}$. The optimal threshold $\tilde{\gamma}$ is given by  
\begin{equation}
\tilde{\gamma} = \arg \min_\gamma P_{e}(\gamma)~.
\end{equation}

For a fixed $\text{PFA}_p$, minimizing $\text{PMD}_p$ requires searching for $\tilde{\gamma}$ using numerical optimization techniques, such as a one-dimensional linear search. This ensures that $P_{e}$ is minimized for a given channel variance $\sigma^2_{W_{\mathrm{b}p}}$. For a given $\mathrm{BD}_p$, if $v_p$, representing the magnitude of the backscatter channel coefficient $h_{\mathrm{b}p}[n]$ following the Rayleigh distribution, the CDF $F_{R_{p}}(\gamma)$ can then be expressed as

\begin{equation}
F_{R_{p}}(\gamma) = \int_{0}^{\infty} F_{R_{p}}(\gamma; v_p) f(v_p) \, dv_p~,
\end{equation}
where $F_{R_{p}}(\gamma; v_p)$ is the conditional CDF of ${R_{p}}$ given $v_p$, $f(v_p) = \frac{v_p}{\sigma_{v_p}^2} e^{-v_p^2 / \sigma_{v_p}^2}$ is the probability density function (PDF) of the Rayleigh distribution. Using the inversion formula from Fourier analysis in \cite{ref36}, $F_{R_{p}}(\gamma; v_p)$ is computed as
\begin{equation}
F_{R_{p}}(\gamma; v_p) = \frac{1}{2} - \int_{-\infty}^\infty \frac{1}{2\pi j t} \Phi_{{R_{p}}|v_p}(t) e^{-j t z} \, dt~.
\end{equation}
In general, $F_{R_{p}}(\gamma)$ lacks a closed-form expression, requiring numerical integration techniques for accurate computation. Alternatively, it can be evaluated using software that supports integration over a Rayleigh-distributed $v_p$. When $v_p$ varies due to multipath effects, $F_{R_{p}}(\gamma)$ is marginalized over $v_p$. However, if $\mathrm{BD}_p$ is close to the receiver, meaning a single-tap path, $v_p$ can be treated as a constant. This simplification makes the computation of $\text{PMD}_p$ and $\text{PFA}_p$ more straightforward.  

\subsection{Non-coherent Detection for MFSK}

In the case of \ac{MFSK} the \ac{Rx} computes the test statistics metrics ${R_p^0}$ for the information bit `0' when assigned $\mathcal{K}_{b_p}^{0}$ set of designated subcarriers and ${R_p^1}$ for the information bit `1', when assigned $\mathcal{K}_{bp}^{1}$ sets of designatedsubcarriers.

\begin{align}
\label{Eq:32}
    {R_p^0} &= \sum_{\tilde{k}_p \in \mathcal{K}_{bp}^{0}} |Y[\tilde{k}_p]|^2, &
    {R_p^1} &= \sum_{\tilde{k}_p \in \mathcal{K}_{bp}^{1}} |Y[\tilde{k}_p]|^2.
\end{align}

The \ac{Rx} then determines the $\mathrm{BD}_p$'s transmitted bit by comparing ${R_p^0}$ and ${R_p^1}$.

\begin{equation}
    b_p =  \begin{cases}
        0, & {R_p^0} > {R_p^1}\\
        1, & {R_p^0}< {R_p^1}
    \end{cases}~,
\end{equation}
where $b_p$ represents the detected bit such that, if ${R_p^0}$ exceeds ${R_p^1}$, the detected bit is `0', the BD shifts the primary signal to the subcarrier set $\mathcal{K}_{bp}^{0}$. Conversely, if ${R_p^0}$ becomes less than ${R_p^1}$ the \ac{Rx} detects information bit `1', the $\mathrm{BD}_p$ shifts the primary signal to $\mathcal{K}_{bp}^{1}$. However, detection errors can occur due to the influence of channel conditions and noise such that, the $\mathrm{BD}_p$ transmits information bit `0' but ${R_p^0}$ less than ${R_p^1}$, the bit erroneously detected as `1', and vice versa. 

The probability of error for non-coherent detection in \ac{MFSK} modulation technique is expressed as
\begin{align}
\label{Eq:34}
P_{e} &= \frac{1}{P} \sum_{p=1}^{P} \frac{1}{2}( ( \Pr(b_p= 1 \mid B_p=0) + \Pr(b_p= 0 \mid B_p=1))~
\end{align}
where $\Pr(b_p= 1 | B_p=0)$ represents the average \ac{PMD}, and $\Pr(b_p= 1 | B_p=1)$ denotes the average \ac{PFA}. These probabilities can be expressed as
\setcounter{equation}{34} 
\begin{equation} 
    \Pr({b_p}|B_p) = 
    \begin{cases}
        \Pr({R_p^0} - {R_p^1} \leq 0), & b_p= 0 \\
        \Pr({R_p^0} - {R_p^1} > 0), & b_p= 1
    \end{cases} \tag{\theequation a}~, 
\end{equation}

\begin{equation} 
    \Pr({b_p}|B_p) = 
    \begin{cases}
        F_{{R_p^0} - {R_p^1}}(0), & b_p= 0 \\
        1 - F_{{R_p^0} - {R_p^1}}(0), & b_p= 1
    \end{cases} \tag{\theequation b}~, 
\end{equation}  
The distributions of ${R_p^0}$ and ${R_p^1}$ depend on the cascaded channels $h_{\mathrm{f}p}[n]$  and $h_{\mathrm{b}p}[n]$. For a specific $h_{\mathrm{b}p}[n]$, the probability density function (PDF) of the cascaded backscatter channel can be derived from Lemma~\ref{lemma:1}, Further, the \ac{CDF} of ${R_p^0} - {R_p^1}$ can also be represented by $F_{{R_p^0} - {R_p^1}}(0)$, as shown below. 
\begin{theorem}
\label{theorem:1}
 The \ac{CDF} of $ {R_p^0} - {R_p^1} $ can be computed as
\begin{equation}
F_{{R_p^0} - {R_p^1}}(0) = \int_0^{\infty} F_{{R_p^0} - {R_p^1}}(0; v_p) f(v_p) \, dv_p~,
\end{equation}
where
    \begin{equation}
        F_{{R_p^0} - {R_p^1}}(0; v_p) = \frac{1}{2} -  \int_{-\infty}^{\infty} \frac{\Phi_{{R_p^0}|v_p}(t) \Phi_{{R_p^1}|v_p}^*(t)}{{2\pi j}t} \, dt~,
    \end{equation}
with $v_p = |h_{\mathrm{b}p}[n]|$ representing the backscatter channel gain and $f(v_p) = \frac{v_p}{\sigma_{v_p}^2} \exp\left(-\frac{{v_p}^2}{\sigma_{v_p}^2}\right)$. The characteristic functions $\Phi_{{R_p^0}|v_p}(t)$ and $\Phi_{{R_p^1}|v_p}(t)$ which are given by
\begin{align}
    \Phi_{{R_p^0}|v_p}(t) &= \frac{1}{1 - it\lambda_{\tilde{k}{_p}}^{-1}}~, &
    \Phi_{{R_p^1}|v_p}(t) &= \frac{1}{1 - it\lambda_{\tilde{k}{_p}}^{-1}}~,
\end{align}
respectively,
where $ \lambda^{-1}_{\tilde{k}} = {2\alpha_\mathrm{b} v_p^2} \sum_{\tilde{k}_p=0}^{N_\mathrm{b}-1} \sigma^2_{h_p, \tilde{k}_p} + \sigma^2_{W_{\mathrm{b}p}} $.
\end{theorem}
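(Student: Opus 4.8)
The plan is to first establish the \emph{conditional} CDF $F_{R_p^0 - R_p^1}(0; v_p)$ with the backscatter channel gain $v_p = |h_{\mathrm{b}p}[n]|$ held fixed, and then to average over $v_p \sim \text{Rayleigh}$ by the total probability theorem, which immediately produces the outer integral $F_{R_p^0 - R_p^1}(0) = \int_0^\infty F_{R_p^0 - R_p^1}(0; v_p) f(v_p)\, dv_p$. For the conditional step I would apply the Gil--Pelaez inversion formula of \cite{ref36}, namely that any real random variable $Z$ with characteristic function $\Phi_Z(\cdot)$ satisfies
\begin{equation}
F_Z(0) = \frac{1}{2} - \int_{-\infty}^{\infty} \frac{\Phi_Z(t)}{2\pi j t}\, dt~,
\end{equation}
the integral being a Cauchy principal value; taking $Z = R_p^0 - R_p^1$ conditioned on $v_p$ then reduces everything to computing $\Phi_{R_p^0 - R_p^1 \mid v_p}(t)$.

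The next step is to observe that, given $v_p$, the energies $R_p^0$ and $R_p^1$ are independent: by \eqref{Eq:32} they are accumulated over the disjoint designated-subcarrier sets $\mathcal{K}_{bp}^{0}$ and $\mathcal{K}_{bp}^{1}$, so the AWGN samples entering them are independent, and --- under the single-path, constant-over-a-symbol approximation of Lemma~\ref{lemma:1} --- once $v_p$ is fixed the backscattered component (present on exactly one of the two sets, according to the transmitted bit) has its residual randomness fully described by the cascaded-channel power $\sigma^2_{h_p,\tilde{k}_p}$. Independence then gives
\begin{equation}
\Phi_{R_p^0 - R_p^1 \mid v_p}(t) = \Phi_{R_p^0 \mid v_p}(t)\, \Phi_{R_p^1 \mid v_p}(-t) = \Phi_{R_p^0 \mid v_p}(t)\, \Phi^{*}_{R_p^1 \mid v_p}(t)~,
\end{equation}
where the second equality uses that $R_p^1$ is real-valued, so $\Phi_{R_p^1\mid v_p}(-t)=\Phi^{*}_{R_p^1\mid v_p}(t)$. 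Substituting this product into the inversion formula yields exactly the stated expression for $F_{R_p^0 - R_p^1}(0; v_p)$.

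It then remains to supply the individual factors. Each $R_p^{b}$ is, conditionally on $v_p$, a sum of independent scaled-exponential terms --- the squared magnitudes of the conditionally circularly-symmetric complex Gaussian frequency-domain samples $Y[\tilde{k}_p]$ --- so its characteristic function is the product $\prod_{\tilde{k}_p}(1 - i t \lambda_{\tilde{k}_p}^{-1})^{-1}$ supplied by Lemma~\ref{lemma:1}; since in the MFSK scheme each of $\mathcal{K}_{bp}^{0}$, $\mathcal{K}_{bp}^{1}$ is a single designated subcarrier, the product collapses to the single-pole forms in the statement, with $\lambda^{-1}_{\tilde{k}} = 2\alpha_\mathrm{b} v_p^2 \sum_{\tilde{k}_p=0}^{N_\mathrm{b}-1}\sigma^2_{h_p,\tilde{k}_p} + \sigma^2_{W_{\mathrm{b}p}}$ on the set carrying the shifted primary signal and the noise-only rate on the other (the two roles swapping with the transmitted bit). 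Marginalizing against the Rayleigh density $f(v_p) = \frac{v_p}{\sigma_{v_p}^2}\exp(-v_p^2/\sigma_{v_p}^2)$ closes the argument.

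The hard part will be the two technical justifications hidden in the above: (i) the conditional independence of $R_p^0$ and $R_p^1$ given $v_p$, which leans on the disjointness of $\mathcal{K}_{bp}^{0}$ and $\mathcal{K}_{bp}^{1}$ and on the Lemma~\ref{lemma:1} approximation making the cascaded channel deterministic within a symbol; and (ii) the use of Fubini's theorem to exchange the $v_p$-integral with the principal-value $t$-integral, which requires checking that the integrand is absolutely integrable away from $t=0$ (the single-pole factors decay like $t^{-2}$ at infinity) and that the $1/t$ singularity at the origin is harmless --- true because $\Phi_{R_p^0-R_p^1\mid v_p}(0)=1$, so the numerator is smooth near $0$ and the offending contribution is an odd function whose principal value vanishes. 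Everything after these points is routine substitution.
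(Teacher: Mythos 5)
Your proof follows the same route as the paper's: Gil--Pelaez inversion of the conditional characteristic function, factorization of $\Phi_{R_p^0 - R_p^1 \mid v_p}(t)$ into $\Phi_{R_p^0\mid v_p}(t)\,\Phi^{*}_{R_p^1\mid v_p}(t)$, the exponential-sum characteristic functions inherited from Lemma~\ref{lemma:1}, and finally marginalization over the Rayleigh-distributed $v_p$. If anything you are more careful than the paper, which asserts the product form of the characteristic function without noting that it rests on the conditional independence of $R_p^0$ and $R_p^1$ (your disjoint-subcarrier argument) and which does not address the Fubini interchange or the principal-value behaviour of the $1/t$ singularity.
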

\begin{proof}
    The proof of Theorem~\ref{theorem:1} is provided in Appendix~\ref{proof:theorem:pdfmfsk}.
\end{proof}
The derivations are applicable to cascaded backscatter channels, assuming Rayleigh fading for both forward and backward links. To calculate $F_{{R}{b_p^{0}} - {R}{b_p^1}}(0)$, we marginalize over $v_p$ to compute $\Pr(b | B_p)$ in (\ref{Eq:34}). Although a closed-form expression is not available, the result remains analytically tractable and can be evaluated numerically to facilitate the calculation of $ P_e$). 

% \subsection{SIC in the Semi-Orthogonal Scheme}
The aforementioned non-coherent detection methods apply to both Fully-Orthogonal and Semi-Orthogonal schemes for detecting \ac{BD} information using null subcarriers. However, in the Semi-Orthogonal scheme, the allocation of null subcarriers differs from that in the Fully-Orthogonal scheme, resulting in a limited number of null subcarriers available for \ac{BD} detection. As (\ref{Eq:20}) and (\ref{Eq:32}) indicate the non-coherent detector accounts for the number of null subcarriers; thus, a reduction in their quantity adversely impacts the detector's performance. The modulation techniques employed by \acp{BD} shift the entire incoming \ac{OFDM} symbol by a unique subcarrier shift, and during reception, these pre-allocated null subcarriers exclusively contain the \ac{BD} signal. However, the remaining designated subcarriers in the \ac{OFDM} signal contain both \ac{BD} data and primary signal data superimposed. To address this issue, we employ \ac{SIC} to remove the unwanted signal components. The receiver first estimates $H_{\mathrm{d}}[k]$ and $X[k]$ in (\ref{Eq:15}), then removes the direct transmission component, obtaining the residual signal $Y[k]_{res}$. Due to estimation errors $\epsilon_{h}= H_{\mathrm{d}}[k] - \hat {H_{\mathrm{d}}}[k]$ and $\epsilon_{x}= X[k] - \hat {X[k]}$, the residual signal contains backscattered components along with residual interference and noise.
\begin{equation}
\label{Eq:38}
\begin{aligned}
Y[k]_{res} =  \hat{Y}_{\mathrm{b}p}[\tilde{k}_p] + \epsilon_{h} \hat{X}[k] + \epsilon_{x} \hat{H_{\mathrm{d}}}[k] + W[k]~,
\end{aligned}
\end{equation}
where $W[k]\sim \mathcal{CN}(0, \sigma^2)$. The \acp{BD} symbols $B_p[k]$ are then detected using maximum likelihood (ML) estimation:
\begin{equation}
    \hat{B}_{b_p}[k] = \arg \min_{\tilde{B}_{p}[k]} \left| Y[k]_{\text{res}} - \alpha_p H_{\mathrm{b}p}[k] H_{\mathrm{f}p}[k] X[k] {\tilde{B}_p[k]} \right|^2~.
\end{equation}
%such that $\tilde{B}_p[k]$ is `0' or `1'.

\subsection{ Sum-Rate of \ac{SR} System}

Sum-rate is a critical metric for evaluating spectral efficiency in \ac{OFDM}-based \ac{SR} systems, particularly in the context of assessing various subcarrier allocation strategies. Since the \ac{SR} operates by sharing the spectrum between the primary and \acp{BD} signals, the total sum-rate is determined by the achievable data rates of both. Several factors, including reflection coefficients, \ac{IBDI}, the number of \acp{BD}, and \ac{DLI}, significantly influence the achievable sum-rate. As each \ac{BD} experiences independent fading, the sum-rate reflects the overall system performance rather than that of individual \ac{BD}. It accounts for transmission opportunities for both primary and \ac{BD} signal transmissions, offering insights into interference effects, spectral resource allocation, and \acp{BD} deployment \cite{ref37}. To analyze the total sum-rate under different subcarrier allocation strategies and multiple \acp{BD}, we decompose it into the sum-rate of the primary and that of \ac{BD} signal transmissions, with the total sum-rate being their sum.
\subsubsection{Sum-rate of \ac{SR} with Fully-Orthogonal Schemes}
In this scheme, the orthogonality of \ac{BD}-to-\ac{BD} subcarrier allocation and the separation between \acp{BD} and the primary signal transmission effectively eliminate interference resulting in an interference-free \ac{SR-BC} with enhanced \ac{BER} performance, as discussed in \cite{ref25}. However, restriction in subcarrier reuse among the transmissions causes poor overall spectral efficiency, leading to a reduction in the overall achievable sum-rate. For this scheme the $\mathrm{BD}_p$ transmission sum-rate can be expressed as
\begin{equation}
R_\text{bd}^\text{FO} = \sum_{p=1}^{P} \sum_{\tilde{k}_p \in \mathcal{K}_{b_p}^{\mathrm{F_{IF}}}} 
{\Delta f} \log_2 \left( 1 + \frac{{\alpha_p^2 |H_{\mathrm{b}p}[ \tilde{k}_p]|^2 |H_{\mathrm{f}p}[ \tilde{k}_p]|^2}} {\sigma^2_{W}}\right),
\end{equation}
where $\Delta f$ is the subcarrier spacing, and  $\mathcal{K}_{b_p}^{\mathrm{F_{IF}}}$ is a set of null subcarriers which is allocated based on applied modulation techniques, \ac{OFSK} or \ac{MFSK}. To obtain the total sum-rate of the system, we need to also calculate the sum-rate of the primary signal transmission. In the Fully-Orthogonal scheme, the primary signal transmission does not encounter interference from the \ac{BD} signals as shown in \figurename~ \ref{fig:image2} and \figurename~ \ref{fig:image4}. Thus, we can describe the primary sum-rate as

\begin{equation}
    R_\text{Rx}^\text{FO} =  
\sum_{\hat{k}_p \in \mathcal{K}_{d}^{\mathrm{F_{IF}}}} \Delta f \log_2 \left( 1 + {\psi_{_\mathrm{F}}} \right)~,
\end{equation}
where $\psi_{_\mathrm{F}}$ =$\frac{|H_{\mathrm{d}}[\hat{k}_p]|^2}{ \sigma^2_{W_{d}}}$ is the \ac{SNR} and $\mathcal{K}_{d}^{\mathrm{F_{IF}}}$ is set of primary data subcarriers. Now, we can calculate the total sum-rate of the \ac{OFDM}-based fully-orthogonal \ac{SR-BC} scheme as 
\begin{equation}
    R_\text{total}^{FO} = R_\text{bd}^\text{FO}+R_\text{Rx}^\text{FO}~.
\end{equation}

\subsubsection{Sum-rate of \ac{SR} with Semi-Orthogonal Schemes}

In the Semi-Orthogonal scheme, limited null subcarriers are allocated to ensure orthogonality, while the remaining designated subcarrier positions experience interference due to subcarrier shifts induced by modulation technique applied at $\mathrm{BD}_p$, as illustrated in \figurename~\ref{fig:image3} and \figurename~\ref{fig:image5}. This shift leads to subcarrier reuse, which destroys orthogonality but enhances transmission opportunities for both primary and \ac{BD} signal transmissions, thereby increasing the total achievable sum rate. In this scheme, the sum rates of \acp{BD} and primary signal transmission are determined based on subcarrier occupancy while accounting for interference effects. For the designated subcarrier left empty, the sum-rate of \acp{BD} transmission is expressed as
\begin{equation}
R_\text{bd}^{S_\text{IF}} = \sum_{p=1}^{P} \sum_{\tilde{k}_p \in \mathcal{K}_{b_p}^{S_\text{IF}}} 
{\Delta f} \log_2 \left( 1 + \frac{{\alpha_p^2 |H_{\mathrm{b}p}[ \tilde{k}_p]|^2 |H_{\mathrm{f}p}[ \tilde{k}_p]|^2}} {\sigma^2_{W}}\right),
\end{equation}
where $\mathcal{K}_{b_p}^{S_\text{IF}}$ is set of null subcarriers.
For the designated subcarriers shared with the primary  signal transmission, the sum-rate is expressed as
\begin{equation}
R_\text{bd}^{S_\text{I}} = \sum_{p=1}^{P} \sum_{\tilde{k}_p \in \mathcal{K}_{b_p}^{S_\text{I}}} 
{\Delta f} \log_2 \left( 1 + \frac{{\alpha_p^2 |H_{\mathrm{b}p}[ \tilde{k}_p]|^2 |H_{\mathrm{f}p}[ \tilde{k}_p]|^2}} {I_{p}[ \hat{k}_p]+\sigma^2_{W}}\right),
\end{equation}
where $\mathcal{K}_{b_p}^{S_\text{I}}$ is a set of designated subcarriers shared between primary and $\mathrm{BD}_p$ transmissions and $I_{p}[ \hat{k}_p] = {|H_{\mathrm{d}}[\hat{k}_p]|^2}$ represents the interference causing a loss of orthogonality in the Semi-Orthogonal scheme. However, by applying \ac{SIC}, this interference is suppressed, leaving only residual interference, which can be derived from (\ref{Eq:38}). The cumulative sum-rate of $BD_p$ transmission is expressed, then sum-rate can be expressed as
\begin{equation}
    R_\text{bd}^{SO} = R_\text{bd}^{S_\text{IF}}+R_\text{bd}^{S_\text{I}}~.
\end{equation}
In all cases $\mathcal{K}_{b_p}^{S_\text{IF}}$ and  $\mathcal{K}_{b_p}^{S_\text{I}}$  are allocated based on the applied modulation techniques, \ac{OFSK} or \ac{MFSK}. To obtain the total sum-rate of the Semi-Orthogonal scheme, we have also to calculate the sum-rate for the primary signal transmission, which is expressed as
\begin{equation}
R_\text{Rx}^\text{SO} =  
\sum_{\hat{k}_p \in \mathcal{K}_{d}^{S_\text{IF}}} \Delta f \log_2 \left( 1 + \psi_{_S} \right) +  
\sum_{\hat{k}_p \in \mathcal{K}_{d}^{S_\text{I}}} \Delta f \log_2 \left( 1 + \lambda \right)~,
\end{equation}
where  $\lambda =\frac{|H_{\mathrm{d}}[\hat{k}_p]|^2}{\alpha_p^2 |H_{\mathrm{b}p}[\tilde{k}_p]|^2 |H_{\mathrm{f}p}[\tilde{k}_p]|^2+ \sigma^2_{W_{d}}}$ is the interference from $\mathrm{BD}_p$ transmissions and $\psi_{_S}$ =$\frac{|H_{\mathrm{d}}[\hat{k}_p]|^2}{ \sigma^2_{W_{d}}}$ is the \ac{SNR} corresponding to primary signal transmission in the Semi-Orthogonal scheme. The total sum-rate of the \ac{OFDM}-based \ac{SR-BC} in the Semi-Orthogonal scheme is expressed as 
\begin{equation}
    R_\text{total}^{SO} = R_\text{bd}^\text{SO}+R_\text{Rx}^\text{SO}~.
\end{equation}

\subsection{SBC Integration with Existing Technologies}
In the proposed scheme, the primary signal and the \acp{BD} signal designs are based on established and readily deployed wireless standards such as Wi-Fi, LTE, and 5G, facilitating direct compatibility with the existing communication infrastructure \cite{ref33}. Moreover, for effective modulation of the primary signal, both the \ac{BD} and the \ac{Rx} must have prior knowledge of \ac{BS} transmission and must maintain synchronization. The \ac{Rx}, acting as a standard user, obtains the necessary details about the primary signal and aligns with the BS using control signals such as the \ac{PSS} and \ac{SSS}, as described in the LTE and 5G specifications. Conversely, \acp{BD} might need supplementary control signals from the \ac{BS}, accompanied by specific synchronization techniques, such as those leveraging the periodic characteristics of LTE signals, as described in \cite{ref37}. Although the proposed methods facilitate the integration of the \ac{SR-BC} to most of the existing technologies in wireless communication, as a frequency shift approach in \ac{OFDM}, it could be susceptible to \ac{CFO}, which creates a large impact on the quality of the received signal, as explained in the next section.

\subsection{CFO Estimation and Compensation in Multi-\ac{BD} \ac{SR-BC}}
\ac{OFDM} is highly susceptible to synchronization errors, notably \ac{CFO} errors typically stem from oscillator mismatches between the transmitter and receiver, which can lead to a fractional or integer shift of subcarriers, resulting in miss detection. Therefore, for effective application of the \ac{OFSK} and \ac{MFSK} schemes, precise estimation and compensation of \ac{CFO} are necessary in order to maintain optimal system performance.
Various techniques for \ac{CFO} estimation have been extensively discussed in the literature \cite{ref38,ref39} where the methods for \ac{CFO} estimation in \ac{OFDM} systems are examined in which the pilot-based mean-square error estimation technique demonstrates superior performance compared to other approaches. 
We consider the transmission of $M$ symbols of the designed \ac{OFDM} signal from the \ac{BS}, with \ac{CP} length denoted as  $N_{cp}$  and the pilot subcarriers indexes $u_{\hat{k}_p}$. 
In the presence of \ac{CFO}, after discarding the \ac{CP}, the received \ac{OFDM} symbol corresponding to the $i^{th}$, transmitted symbol is given by
\begin{align}
    \hat{Y}_{\mathrm{d}}
[\hat{k}_\mathrm{d}] &= e^{j 2 \pi \varepsilon_0 \left( N_{cp} + N \right) / N} 
    \mathbf{\theta_p}(\varepsilon_0) 
    F_N^{H_{\mathrm{d}}} 
    H_{\mathrm{d}} 
    X_i[\hat{k}_\mathrm{d}] 
    + W_{\mathrm{d}}[\hat{k}_\mathrm{d}]~,
\end{align}
    and
\begin{align}
     \hat{Y}_{\mathrm{b}p}[\tilde{k}_p] &= \sum_{p=1}^P 
    e^{j 2 \pi \varepsilon_0 \left( N_{cp} + N \right) / N} 
    \mathbf{\theta_p}(\varepsilon_0) 
    F_N^{H_{\mathrm{f}p}} 
    H_{\mathrm{f}p}[\tilde{k}_p]{X}_i[\hat{k}_p]\notag\\
     &\quad H_{\mathrm{b}p}[\tilde{k}_p]  
    *B_p[k]\alpha_p  
    + W_{\mathrm{b}p}[\tilde{k}_p]~,
\end{align}
where
$i \in \{1, 2, \dots, M\}$ and
\begin{equation}
\mathbf{\theta_p}(\varepsilon_0) = \operatorname{diag} \left( 1, e^{j 2 \pi \varepsilon_0 / N}, \dots, e^{j 2 \pi \varepsilon_0 (N-1) / N} \right)~.
\end{equation}  
At the receiver, the \ac{CFO} is estimated using following optimization problem,
\begin{equation}
\bar{\varepsilon} = \arg\max_{\varepsilon_0} \left\{ Y_i^H \mathbf{\theta_p}(\varepsilon_0) X_i \left( X_i^H X_i \right)^{-1} X_i^H \mathbf{\theta_p}^H(\varepsilon_0)Y_i \right\}~.
\end{equation} 
After estimating $\varepsilon_0$, \ac{CFO} compensation is applied, the received symbol becomes
\begin{equation}
\bar{\mathbf{Y[k]}}_i = e^{j 2 \pi \bar{\varepsilon} \left( N_{cp} + N \right) / N} \mathbf{\theta_p}^H(\bar{\varepsilon}) \mathbf{Y[k]}_i~.
\end{equation}  
The estimation process  may account for some errors leading to residual \ac{CFO}, denoted as
\begin{equation}
\Delta \varepsilon = \varepsilon_0 - \bar{\varepsilon}~, 
\end{equation}  
which would cause phase shifts across subcarriers, leading to \ac{ICI}. The optimal \ac{CFO} estimate minimizes the residual \ac{CFO}, which is analyzed using variance calculations of pilot subcarriers \cite{ref39}.
\begin{equation}
F(\bar{\varepsilon}) = \frac{1}{N_p} \sum_{k=1}^{N_p} 
\frac{\sum\limits_{i=0}^{M-1} \left| \frac{\bar{Y}[k]_i (u_{\hat{k}_p}; \bar{\varepsilon})}{P_i(k)} \right|^2 
- \left| \frac{1}{M} \sum\limits_{i=0}^{M-1} \frac{\bar{Y}[k]_i (u_{\hat{k}_p}; \bar{\varepsilon})}{P_i(k)} \right|^2}
{\sum\limits_{i=0}^{M-1} \left| \frac{\bar{Y}[k]_i (u_{\hat{k}_p}; \bar{\varepsilon})}{P_i(k)} \right|^2}~.
\end{equation}
The optimal estimated \ac{CFO} is obtained by minimizing this function  
\begin{equation}
\epsilon = \arg \min_{\bar{\epsilon}} F(\bar\epsilon)~.
\end{equation}

\section{Numerical Results}
In this section, we present a comprehensive simulation-based evaluation of the proposed schemes, focusing on key performance metrics such as error-rate and sum-rate of the \ac{SR} system.  Initially, we examine the impact of $\alpha_p$, under different \ac{SNR} for the transmission of the \acp{BD} signals. This parameter plays a crucial role in determining the quality of the backscattered signal. Next, we explore the impact of \ac{OFDM} signal characteristics. Specifically, we investigate the impact of the size of $N$, and subcarrier allocation schemes, which altogether are fundamental factors in determining the reliability and spectral efficiency of the \ac{SR} system. After conducting a detailed analysis of the proposed work, we benchmark our results against conventional techniques used in multi-IoT device scenarios, as outlined in \cite{ref26}, while ensuring compliance with LTE standards as considered in the LTE-based system described in \cite{ref40}. 

In this work, we consider a primary system that employs binary phase shift keying (BPSK) modulation, with $N\in\{64, 128, 256, 512\}$, \ac{CP} length, $N_{cp} = N/8$ and subcarrier spacing, $\Delta f$ =15 kHz. Furthermore, we evaluate system performance for different values of $\alpha_p$, where $\alpha_p \in\{0.25, 0.5, 0.75, 1\}$. The detection threshold $\tilde{\gamma}$ is selected based on the guidelines in \cite{ref9}, with $\text{PFA}_p$ of $10^{-3}$. 

\subsection{Simulation Results for Error Rate}

In Fig.~\ref{fig:image20}, the BER performance of the primary signal transmission is analyzed under four different modulation techniques. The results indicate a significant reduction in \ac{BER} as \ac{SNR} increases, demonstrating that, the Fully-Orthogonal scheme with \ac{MFSK} exhibits the most favorable performance, due to the high degree of Orthogonality between \ac{BD} and Primary signals because of a large number of allocated null subcarriers. In contrast, the Semi-Orthogonal with \ac{OFSK} consistently yields the highest \ac{BER}, indicating greater susceptibility to interference and reduced detection efficiency due to the fewer number of null subcarriers allocated, which signifies a high loss of orthogonality. In particular, at an SNR of 20 dB, the Fully-Orthogonal scheme with \ac{MFSK} achieves a \ac{BER} below $10^{-2}$, whereas the Semi-Orthogonal with \ac{OFSK} remains above $10^{-1}$.

\begin{figure}[t]  
    \centering
   \includegraphics[width=0.85\linewidth]{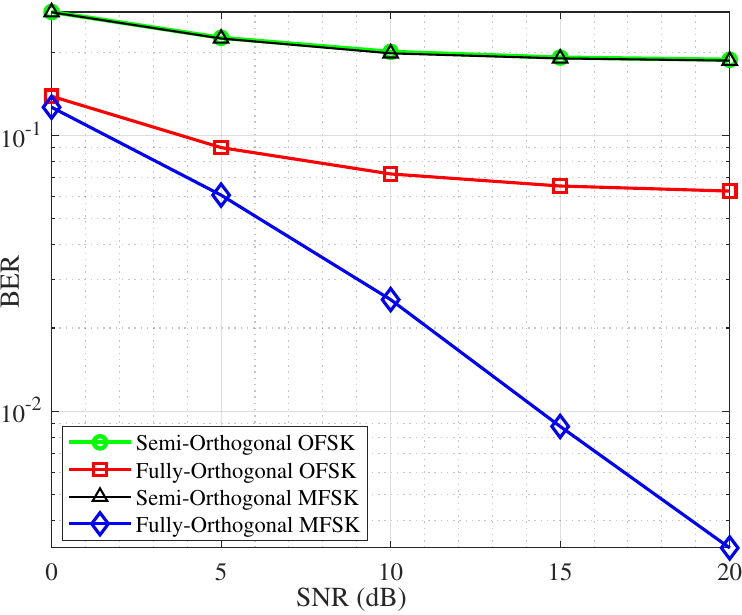} 
    \caption{The BER performance of the primary signal transmissions at $\alpha_p =0.25$.}
    \label{fig:image20}
\end{figure}

% \subsection{Simulation Results for OFSK modulation technique}

In \figurename~\ref{fig:image6}, we analyze the performance of non-coherent detection of \ac{OFSK} modulated signal in Fully-Orthogonal scheme, we examine average \ac{PMD} versus \ac{SNR} for different values of $\alpha_p$. A decrease in the average \ac{PMD} is observed with the increase in \ac{SNR} values. The lowest value of the average \ac{PMD} is achieved at $\alpha_p = 1$, indicating that $\mathrm{BD}p$ reflects the incoming signal with maximum power. For example, at an \ac{SNR} of 25 dB, increasing $\alpha_p$ from 0.25 to 1 reduces the average $\text{PMD}$ from approximately $10^{-2}$ to $10^{-3}$.The theoretical results are also aligned with the simulation results.

\begin{figure}[t]
    \centering
    \includegraphics[width=0.85\linewidth]{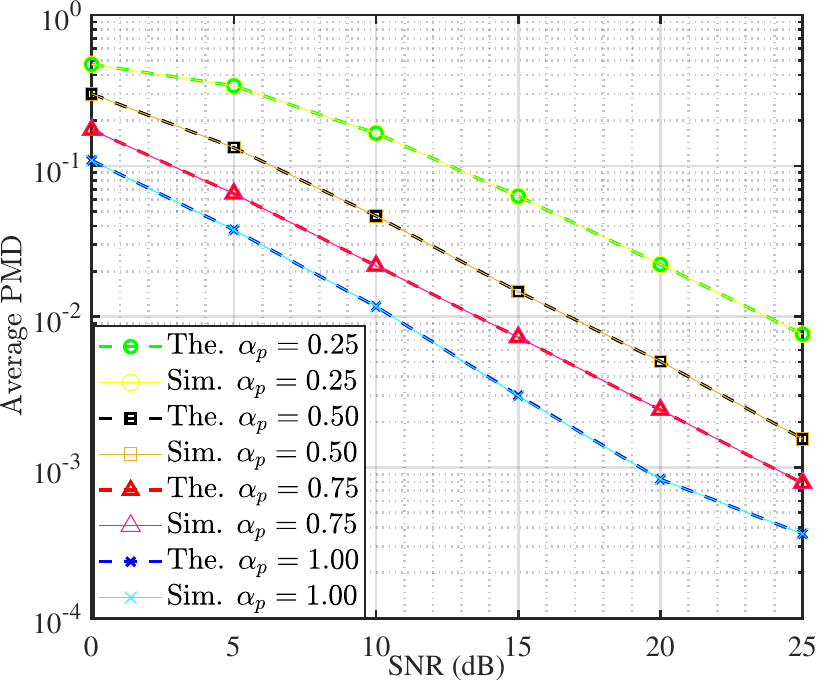}
    \caption{The average PMD of SBC for Fully-Orthogonal scheme with OFSK modulation at $N = 64$.}
    \label{fig:image6}
\end{figure}

In \figurename~\ref{fig:image7}, we provide insights on the average \ac{PMD} performance of the non-coherent detection of \ac{OFSK} modulated signal in Fully-Orthogonal scheme, by examining the \ac{BER} performance for different values of $N$. It is observed that the average \ac{BER} decreases with increasing in $N$. For instance, at $20$ dB SNR value average \ac{PMD} reaches near to $10^{-2}$ with $N=256$. The lowest average \ac{PMD} is achieved at $25$ dB SNR value with $N=512$. Besides, the theoretical average \ac{PMD} curves are aligned with the curves obtained from simulations. This justifies that, increasing $N$ provides greater flexibility in allocating null subcarriers, thereby enhancing system performance. The results indicate a substantial reduction in the average \ac{PMD} as $N$ increases.

In \figurename~\ref{fig:image8}, we analyze the \ac{ROC} of \ac{OFSK} modulation in Fully-Orthogonal scheme for various \ac{SNR} values, for $\alpha_p = 0.25$ and $N = 64$. It is observed that at $\text{PFA}_p = 0.2$ the system achieves a detection probability average $P_{D}$ of $45\%$ at $0$ dB \ac{SNR}. As the \ac{SNR} increases to $5$ dB, the average $ P_{D}$ increases to approximately $70\%$ for the same $\text{PFA}_p$. Further increasing the SNR to $10$ dB improves $P_{D}$ to around $90\%$. These findings suggest that even with a low reflection coefficient in a low-cost \ac{BD}, reliable detection performance can be achieved when the receiver operates under high \ac{SNR} conditions.

\begin{figure}[t]
   \centering
    \includegraphics[width=0.85\linewidth]{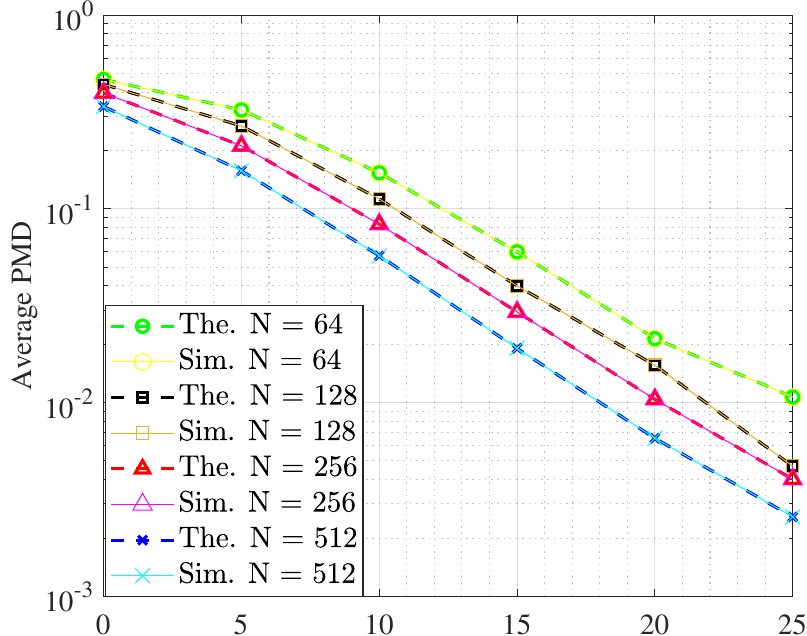}
    \caption{The average PMD of SBC for the Fully-Orthogonal scheme with OFSK modulation at $\alpha_p=0.25$.} 
    \label{fig:image7}
\end{figure}

\begin{figure}[t]  
    \centering
    \includegraphics[width=0.85\linewidth]{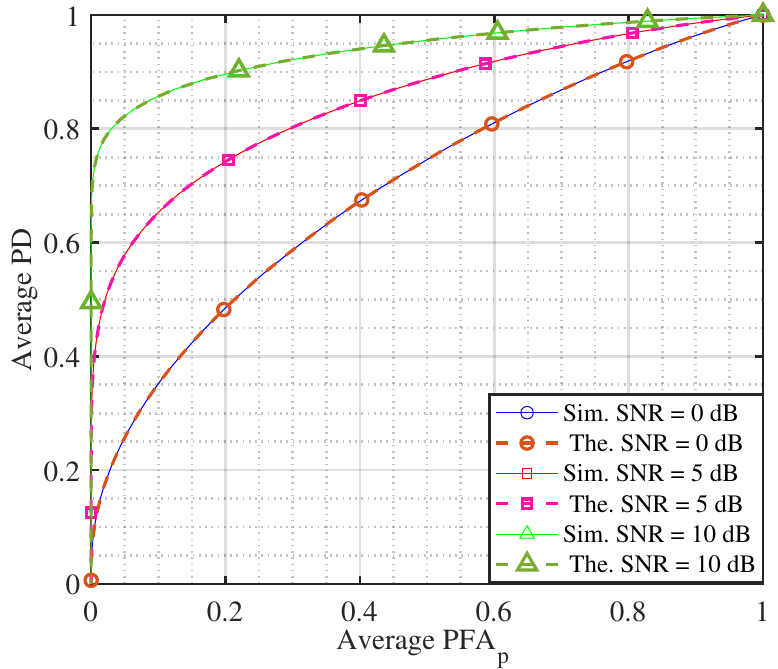} 
    \caption{The ROC of SBC for the Fully-Orthogonal scheme with OFSK modulation at $N = 64$ and $\alpha_p=0.25$.}
    \label{fig:image8}
\end{figure}

In \figurename~\ref{fig:image9}, we analyze the performance of non-coherent detection for the Semi-Orthogonal scheme with \ac{OFSK} modulation by examining the average $\text{PMD}$ for different values of $\alpha_p$. The results indicate that the average \ac{PMD} in the Semi-Orthogonal scheme with \ac{OFSK} is higher compared to the Fully-Orthogonal \ac{OFSK}. For instance, at SNR of 15 dB and $\alpha_p = 0.25$, the average \ac{PMD}  is approximately $10^{-1}$, where after applying \ac{SIC}, a significantly drop to below $10^{-2}$ is observed. This signifies that the effect of SIC could recover the detection performance to the ideal values where the significant decrease in interference is indicated by a decrease in the average $\text{PMD}$.

\begin{figure}[t]  
    \centering
   \includegraphics[width=0.85\linewidth]{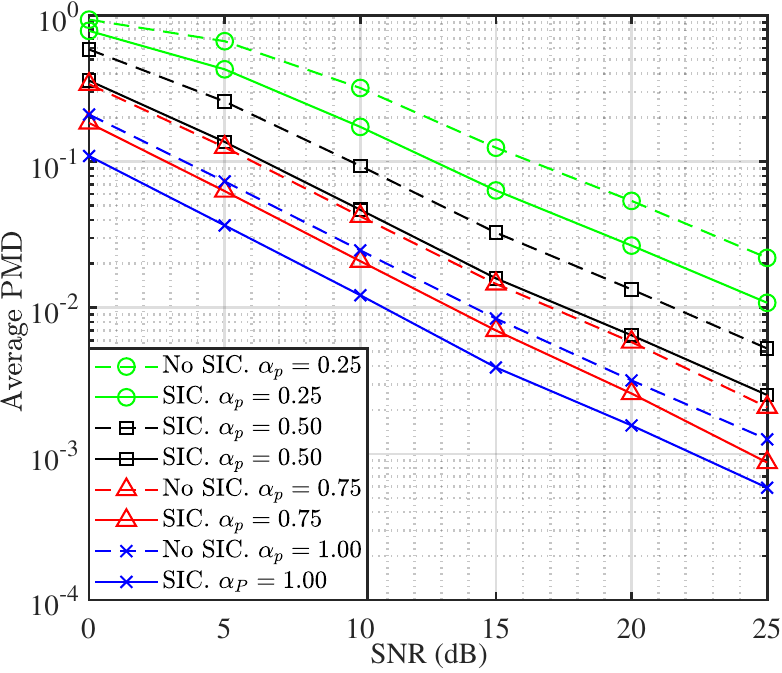} 
    \caption{The average PMD of SBC for the Semi-Orthogonal scheme with OFSK modulation at $N = 64$. }
    \label{fig:image9}
\end{figure}

% \subsection{Simulation Results for MFSK modulation technique}

In \figurename~\ref{fig:image10}, we present the average \ac{BER} performance of non-coherent detection of the \ac{MFSK} modulated signal in the Fully-Orthogonal scheme. The results indicate a consistent reduction in \ac{BER} with increasing \ac{SNR} across all values of $\alpha_p$. It is observed that higher values of $\alpha_p$ lead to significantly improved \ac{BER} performance due to the enhanced backscattered signal power. For instance, at \ac{SNR} of 15 dB, increasing $\alpha_p$ from 0.25 to 1 reduces the \ac{BER} from approximately $10^{-1}$ to $10^{-2}$. Similarly, at an SNR of 25 dB, the \ac{BER} for $\alpha_p = 1$ reaches the lowest value of approximately $10^{-3}$. The results also demonstrate that theoretical predictions align with simulation, validating the analytical model and the performance of the proposed scheme.

\begin{figure}[t]  
    \centering
    \includegraphics[width=0.85\linewidth]{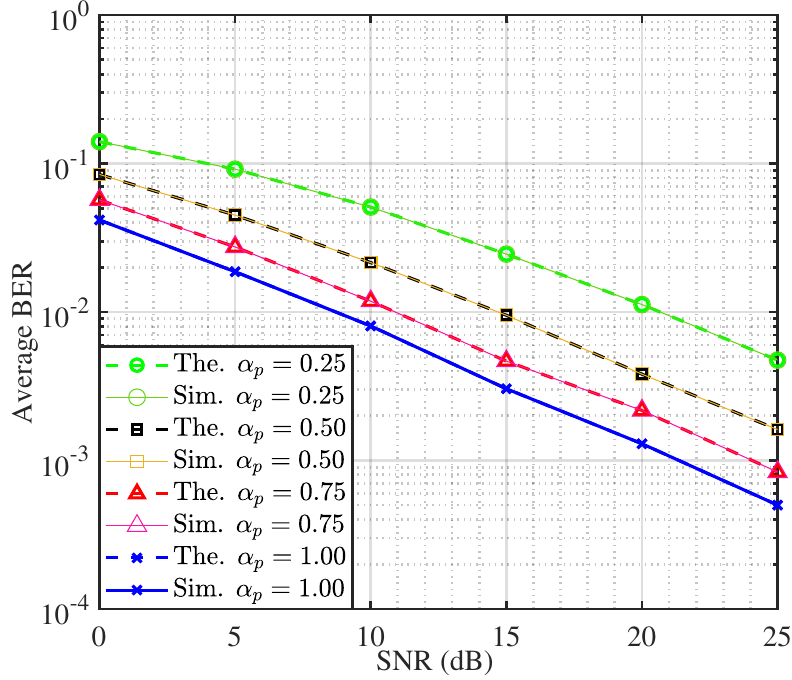} 
    \caption{The average BER of SBC for the Fully-Orthogonal scheme  with MFSK modulation at $N = 64$.}
    \label{fig:image10}
\end{figure}

In \figurename~\ref{fig:image11} we analyze the average \ac{BER} of the Fully-Orthogonal \ac{MFSK} modulated signal under the influence of different values of $N \in \{64, 128, 256, 512\}$. The results indicate that increasing $N$ leads to a significant decrease in \ac{BER}. For instance, at an \ac{SNR} of 25 dB, the average \ac{BER} decreases from above $10^{-2}$ to below $10^{-2}$ as $N$ increases from $128$ to $512$. Indicating that increasing $N$ provides more dedicated resources for the \ac{SR-BC} signal which in turn enhances the average \ac{BER} performance.

\begin{figure}[t]  
    \centering
   \includegraphics[width=0.9\linewidth]{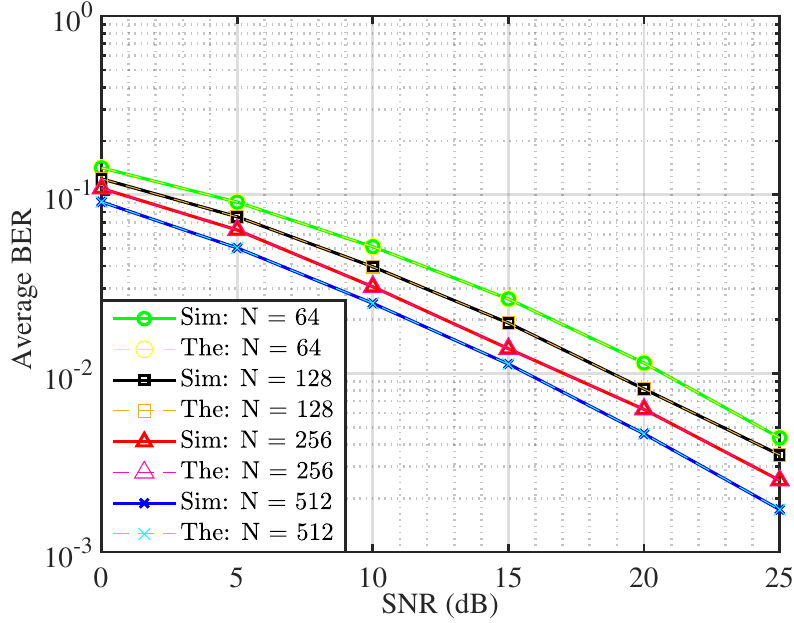} 
   \caption{The average BER of SBC for the Fully-Orthogonal scheme with MFSK modulation at  $\alpha_p=0.25$.}
    \label{fig:image11}
\end{figure}

\figurename~\ref{fig:image12} illustrates the average \ac{BER} of the detector for the \ac{MFSK} modulated signal in the Semi-Orthogonal scheme, considering different values of $\alpha_p$. Allocating fewer null subcarriers causes the BD signal to experience strong \ac{DLI} at non-null designated subcarriers. However, after applying \ac{SIC}, the interference is significantly reduced, resulting in improved average \ac{BER} performance of \ac{SR-BC}, which converges to that of the  Fully-Orthogonal scheme. The average \ac{BER} reduction indicates the effectiveness of the applied \ac{SIC} to mitigate the partial interference and counteract the effects of the cascade channel.

\begin{figure}[t]  
    \centering
   \includegraphics[width=0.85\linewidth]{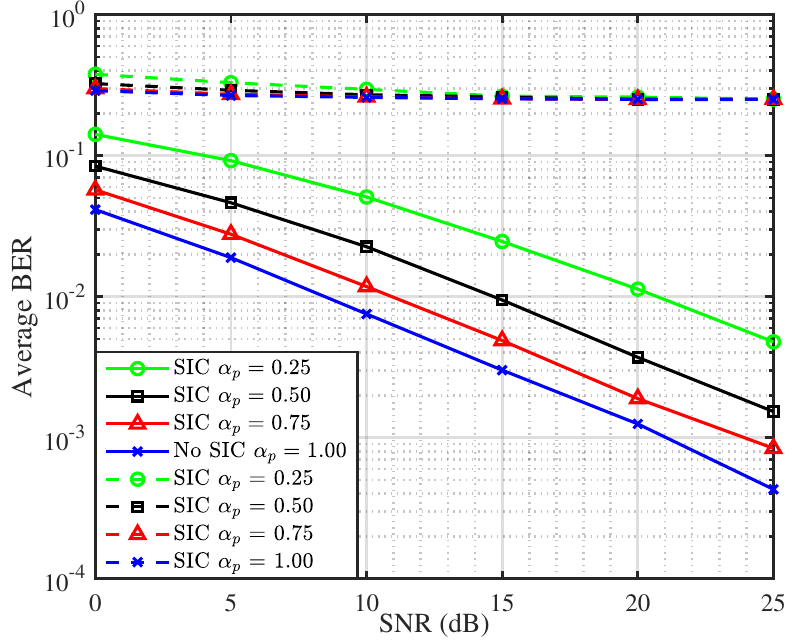} 
    \caption{The average BER of SBC for the Semi-Orthogonal scheme with MFSK modulation, for $N = 64$ with and without SIC.}
    \label{fig:image12}
\end{figure}

In \figurename~\ref{fig:image13}, we analyze the \ac{BER} performance of the non-coherent detection of the \ac{MFSK} modulated signal in the Fully-Orthogonal scheme under \ac{CFO} impairments, with \ac{CFO} $\in  \{0, 0.01, 0.03, 0.05\}$ for $\alpha_p = 0.25$. As it is seen, the \ac{BER} increases as the \ac{CFO} increases. In particular, higher \ac{CFO} values cause high miss detection of the detector which leads to significant performance degradation, expressed by high \ac{BER}. However, implementing \ac{CFO} estimation and compensation effectively restores system performance, reducing the \ac{BER} to the ideal values.

\begin{figure}[t]  
    \centering
   \includegraphics[width=0.90\linewidth]{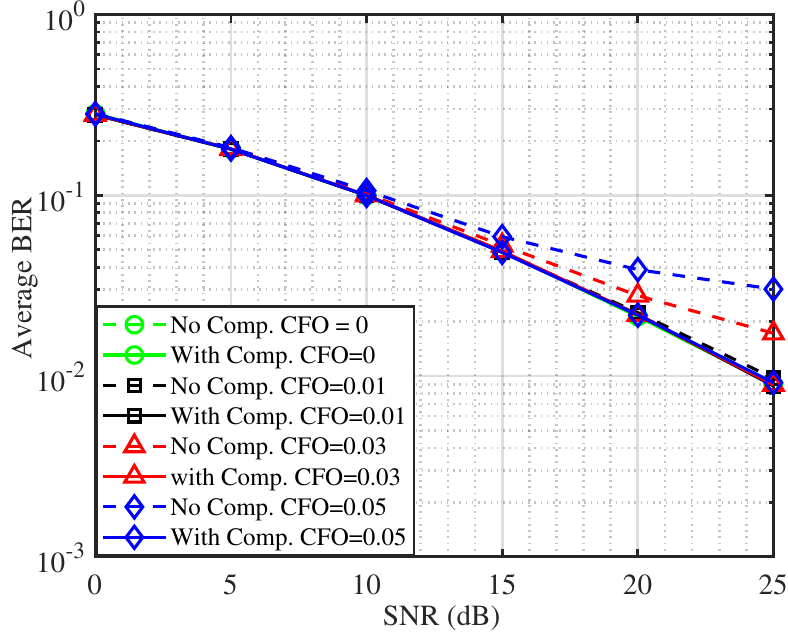} 
    \caption{The average BER SBC for the Fully-Orthogonal scheme with OFSK modulation with and without CFO compensation at $\alpha_p = 0.25$.}
    \label{fig:image13}
\end{figure}

In \figurename~\ref{fig:image14}, we analyze the \ac{BER} performance of a non-coherent detector of \ac{MFSK} modulated signal in the Fully-Orthogonal scheme for the normalized value of $\text{\ac{CFO}} = 0.5$ for various values of $\alpha_p \in \{0.25, 0.5, 0.75, 1\}$. The results show that the errors due to \ac{CFO} impairments increase the \ac{BER} with a decrease in  $\alpha_p$. The lower $\alpha_p$ demonstrates the higher the average \ac{BER} under  \ac{CFO} than the higher $\alpha_p$. However, the effective implementation of \ac{CFO} estimation and compensation successfully restores system performance, bringing the \ac{BER} back to its optimal levels. For instance, at an \ac{SNR} of 25 dB, after \ac{CFO} estimation and compensation the average \ac{BER} for $\alpha_p = 1$ is reduced from the average \ac{BER} value of $0.3$ to $10^{-3}$.

\begin{figure}[t]  
    \centering
   \includegraphics[width=0.85\linewidth]{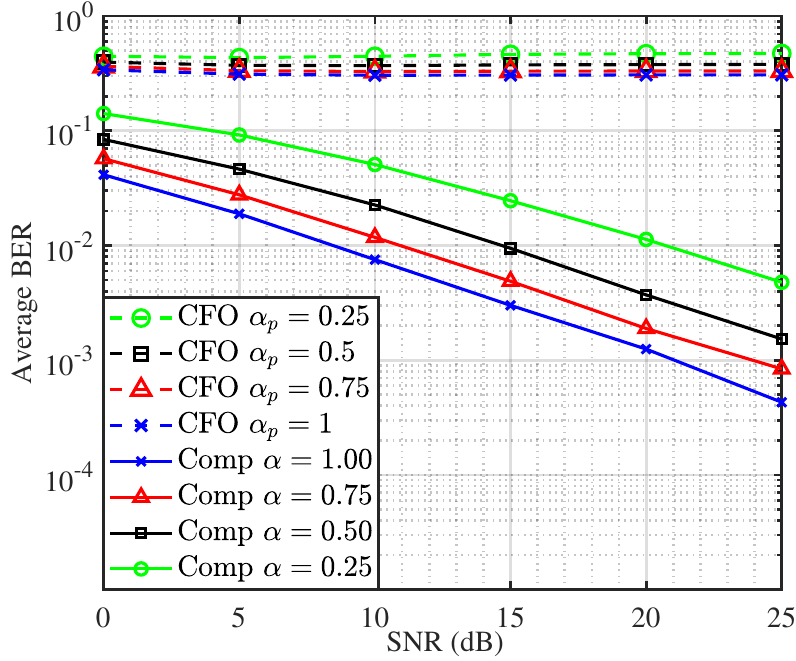} 
    \caption{The average BER of SBC for the Fully-Orthogonal scheme with MFSK modulation at $N=64$ with $\text{CFO}=0.05$ and compensation.   }
    \label{fig:image14}
\end{figure}

In \figurename~\ref{fig:image15}, we compare the system performance \ac{BER} performance of \ac{OFSK} and \ac{MFSK} modulation techniques with the SA scheme for multiple \ac{IoT} devices, as presented in \cite{ref27}. The simulation results demonstrate that the proposed schemes outperform the SA approach. This performance enhancement is attributed to the fact that, in the proposed modulation techniques, the \ac{SR-BC} signal is transmitted over at least a given number of dedicated RF resources which are the null subcarriers, effectively reducing interference from other transmission links. Consequently, the overall system performance is significantly improved. On the other hand, the SA scheme suffers from high interference due to its reliance solely on \ac{SIC} while sharing the same RF time-frequency resources with primary signal transmission. As a result, it requires a higher \ac{SNR} for effective transmission, as illustrated in \figurename~\ref{fig:image15}. This contradicts the ideal low-power design principles, making SA-based approaches less suitable for efficient low-power communication compared to the proposed schemes.

\begin{figure}[t]
    \centering
    \includegraphics[width=0.95\linewidth]{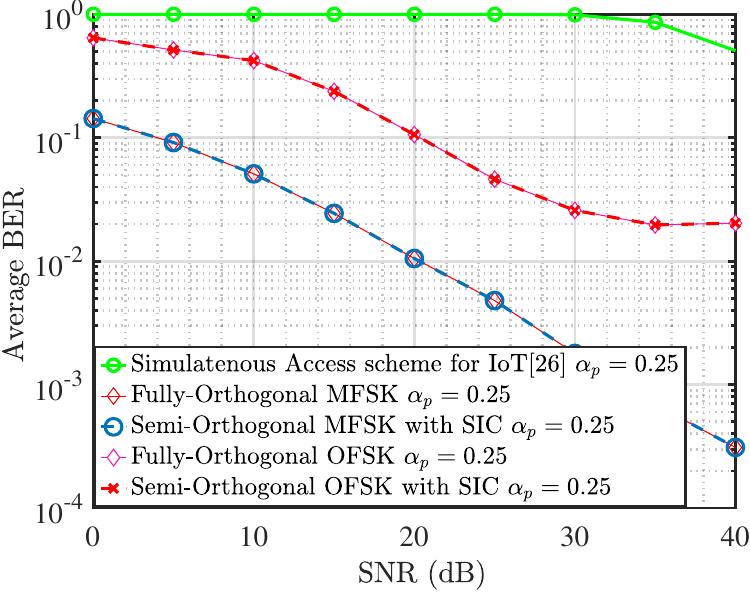}
    \caption{A comparison of the proposed schemes with the baseline approach \cite{ref26}.}
    \label{fig:image15}
\end{figure}

\subsection{Simulation Results for Sum-Rate}

In Fig.~\ref{fig:image16}, we evaluate the total sum-rate performance of various modulation techniques under varying numbers of \acp{BD}. The results show that increasing the reuse of the subcarrier leads to higher total system throughput. In this context, Semi-Orthogonal schemes outperform their Fully-Orthogonal counterparts. Specifically, the Semi-Orthogonal scheme with \ac{OFSK} achieves the highest sum-rate, reaching approximately $0.65$ Mbps for $8$ \acp{BD}, followed by the Semi-Orthogonal scheme with \ac{MFSK} at around $0.55$ Mbps, reflecting efficient use of spectral resources. In contrast, Fully-Orthogonal schemes exhibit significantly lower sum-rates, with  \ac{OFSK} achieving approximately $0.1$ Mbps and \ac{MFSK} reaching only 0.08 Mbps for the same number of \acp{BD}.

\begin{figure}[t]  
    \centering
   \includegraphics[width=0.95\linewidth]{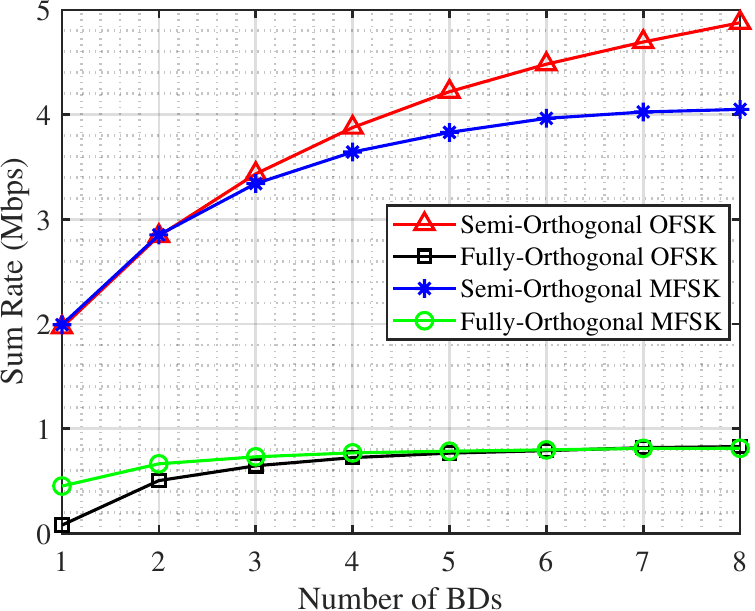} 
    \caption{The total sum-rate of OFDM-based \ac{SR} system at SNR = 20 dB and $\alpha_p = 0.25$ for the four schemes.}
    \label{fig:image16}
\end{figure}

\figurename~\ref{fig:image17} illustrates the total sum-rate performance of the Semi-Orthogonal and  Fully-Orthogonal schemes with both OFSK and MFSK modulations as a function of the number of \acp{BD} at \ac{SNR} of 20 dB for different values of $\alpha_p$. The results indicate that the total sum-rate increases as the number of \acp{BD} increases, demonstrating improved spectral efficiency with additional backscatter links. However, the total sum-rate performance varies significantly between the Semi-Orthogonal and Fully-Orthogonal schemes. In the Semi-Orthogonal scheme, the total sum-rate is higher compared to that in the fully orthogonal case, because of the efficient reuse of spectral resources. This is particularly evident for large values of $\alpha_p$, where the backscatter signals are stronger, leading to improved sum-rate performance. The performance gap between the Semi-Orthogonal and the Fully-Orthogonal schemes is more pronounced when $\alpha_p = 1$. For example, in the presence of $3$ \acp{BD} the Semi-Orthogonal with \ac{OFSK} schemes outperform others by achieving the sum-rate close to $7$ Mbps. Overall, the results highlight the benefits of the Semi-Orthogonal multiple access approach in maximizing the sum-rate while maintaining efficient bandwidth utilization, particularly when stronger \ac{SR-BC} signals are available.

\begin{figure}[t]  
    \centering
   \includegraphics[width=0.94\linewidth]{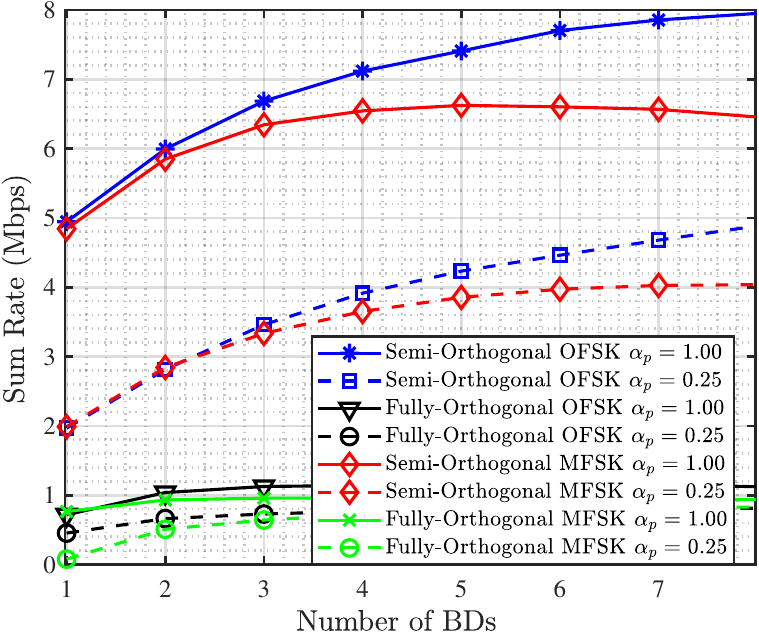} 
    \caption{The Total sum-rate of OFDM-based SR system at SNR = 20 dB.}
    \label{fig:image17}
\end{figure}

In Fig.~\ref{fig:image18}, we analyze the total sum-rate performance of multiple \acp{BD} under varying \ac{SNR}. The results demonstrate that an increase in \ac{SNR} level improves the total sum-rate consistently. The Semi-Orthogonal scheme with \ac{OFSK} scheme achieves the highest total sum-rate, reaching approximately $6.5$ Mbps at 25 dB, attributed to its optimal utilization of subcarriers for signal transmission.  In contrast, the Fully-Orthogonal scheme with \ac{MFSK} shows the lowest sum-rate due to its orthogonal subcarrier allocation, which is less spectrally efficient.

\begin{figure}[t]  
    \centering
   \includegraphics[width=0.95\linewidth]{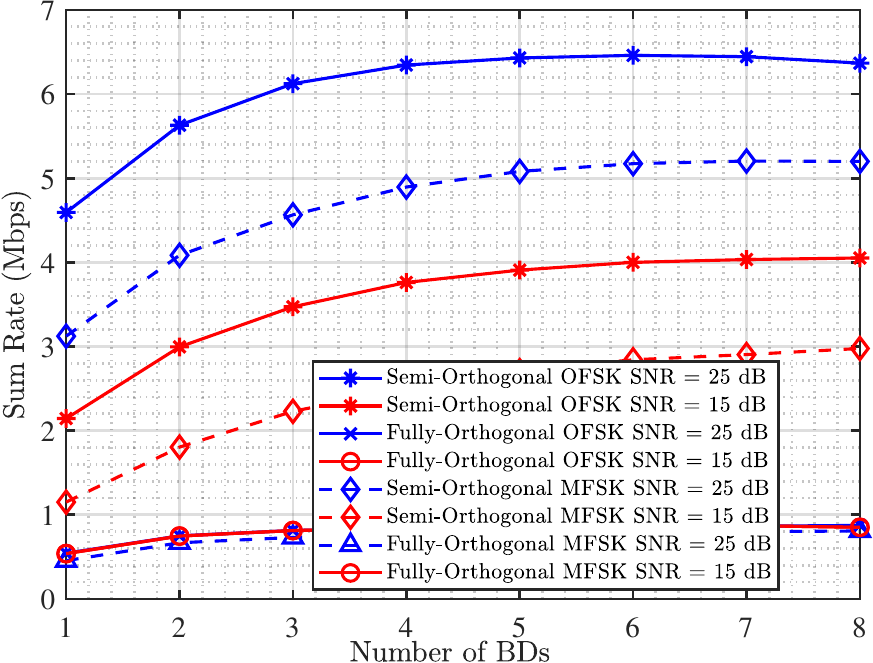} 
    \caption{The total sum-rate of OFDM-based SR system at $\alpha = 0.25$ for various values of SNR (dB).}
    \label{fig:image18}
\end{figure}

\begin{figure}[t]  
    \centering
   \includegraphics[width=0.90\linewidth]{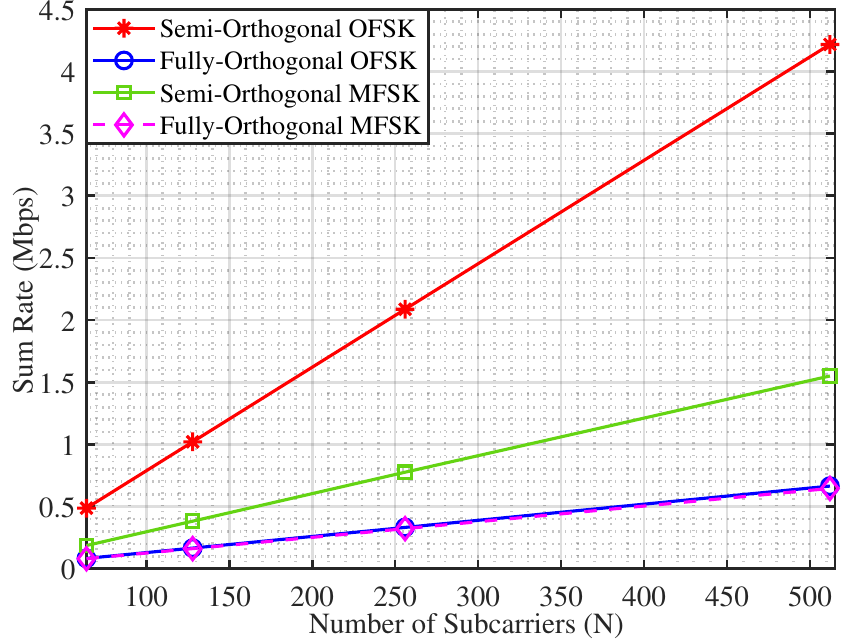} 
    \caption{The sum-rate of SR at $\text{SNR} = 10$ dB and $\alpha_p =0.25$.}
    \label{fig:image19}
\end{figure}

In Fig.~\ref{fig:image19}, we analyze, the total sum-rate performance is analyzed as a function $N$. The results indicate that as $N$ increases, the total sum-rate improves across all schemes. The Semi-Orthogonal scheme with OFSK achieves the highest sum-rate, reaching approximately $4.5$ Mbps at $N = 512$, highlighting its superior spectral efficiency due to effective subcarrier utilization. In comparison, \ac{MFSK} modulation techniques yield substantially lower sum-rates, with the Semi-Orthogonal scheme with \ac{MFSK} scheme reaching about $1.5$ Mbps at the same value of $N$. Moreover, the Semi-Orthogonal scheme consistently outperforms the Fully-Orthogonal scheme with both \ac{OFSK} and \ac{MFSK}, emphasizing the advantage of reduced null subcarrier placement for maximizing overall system performance.

\section{Concluding Remarks}
This study provides a framework for interference mitigation in multi-\ac{BD} \ac{OFDM}-based \ac{SR} systems. The primary \ac{OFDM} signal is designed based on Fully-Orthogonal and Semi-Orthogonal schemes for multiple access of \acp{BD} and the primary user. In these schemes, null and data subcarriers are placed strategically within the band, where null subcarriers serve as dedicated frequency resources for \ac{SR-BC}. Additionally, two modulation techniques including \ac{OFSK} and \ac{MFSK} are introduced at the \acp{BD}. The Fully-Orthogonal schemes demonstrate superior average \ac{BER} performance compared to Semi-Orthogonal schemes. Specifically, the Fully-Orthogonal scheme with \ac{MFSK} modulation achieves the best average \ac{BER} performance, albeit with a substantial loss in spectral efficiency. This loss of spectral efficiency is overcome in the Semi-Orthogonal schemes. Among these schemes, the Semi-Orthogonal scheme with \ac{OFSK} modulation achieves the highest spectral efficiency but with a trade-off in reduced \ac{BER} performance. To mitigate this loss in \ac{BER}, \ac{SIC} is applied within the Semi-Orthogonal schemes. We also analyze the performance of our proposed schemes under the influence of CFO and present a compensation method to mitigate its effects. The \ac{BER} performance of the proposed detector is analytically derived and validated through simulations. The results demonstrate that increasing the \ac{OFDM} symbol size or the reflection coefficient significantly improves both average \ac{BER} performance and the total sum-rate. The proposed approaches can be seamlessly integrated into existing network technologies such as Wi-Fi, LTE, and 5G, enabling a wide range of low-power and ambient \ac{SR-BC} applications. However, achieving high gains in both spectral efficiency and \ac{BER} performance remains a challenge. While the Semi-Orthogonal schemes with \ac{SIC} show enhanced spectrum utilization, the inherently weak nature of \ac{SR-BC} signals still poses limitations. Future research may target advanced interference suppression techniques for the Semi-Orthogonal scheme and optimal resource allocation strategies for achieving high spectral efficiency in the Fully-Orthogonal scheme.

% \section*{appendix A}
\appendices
\section{The Proof of Lemma~\ref{lemma:1}}
\label{prof:lemma:pdfOFSK}
\textbf{Proof.} The received signal is a function of the product of two random variables: $|h_\mathrm{b}|$, a Rayleigh random variable, and the other channel $|h_{\mathrm{f}p}|^2$, an exponential random variable. Let
\begin{equation}
    u_p = \sum_{m=1}^{N_\mathrm{b}} |h_{\mathrm{f}p}[m]|^2
\end{equation}
be the sum of exponential random variables, while the characteristic function of an exponential random variable
\begin{equation}
    f(x;\lambda_m) = \lambda_m e^{-\lambda_m x}
\end{equation}
is given by its Fourier transform:
\begin{equation}
    \Phi_m(t) = \int e^{(it - \lambda_m)z} dx = \frac{1}{1 + it\lambda_m^{-1}}~,
\end{equation}
where $\Phi_m(t)$ is the Fourier transform of $f(x;\lambda_m)$.
To find the distribution of $u$, we need to compute the convolution of $N_\mathrm{b}$ exponential random variables, which is equivalent to the product of their characteristic functions expressed as
\begin{equation}
\begin{split}
    \Phi_{u_p} =& \int e^{itz} (\lambda_1 e^{-\lambda_1 z} * \lambda_2 e^{-\lambda_2 z} * \cdots * \lambda_{N_\mathrm{b}} e^{-\lambda_{N_\mathrm{b}} z}) dz~, \\ =& \prod_{m=1}^{N_\mathrm{b}} \frac{1}{1 + it\lambda_m}~.
    \end{split}
\end{equation}
Now, the \ac{PDF} of $u$ can be found by taking the inverse Fourier transform of $\Phi_u(t)$ as
\begin{equation}
    f_{u_p}(u_p) = \int_{-\infty}^{\infty} \frac{1}{2\pi j t} \prod_{m=1}^{N_\mathrm{b}} \frac{1}{1 + it\lambda_m^{-1}} dt~.
\end{equation}

\section{The Proof of Theorem}
\label{proof:theorem:pdfmfsk}
\textbf{Proof.} The received signal is a function of the product of two random variables: $|h{_\mathrm{b}}_p|$, a Rayleigh random variable, and the other channel $|h_{\mathrm{f}p}|^2$, an exponential random variable. Let it be expressed as
\begin{equation}
    u_p = \sum_{m=1}^{N_\mathrm{b}} |h_{\mathrm{f}p}[m]|^2~,
\end{equation}
which is the sum of exponential random variables, while the characteristic function of an exponential random variable is given by
\begin{equation}
    f(x;\lambda_m) = \lambda_m e^{-\lambda_m x}~.
\end{equation}
Its Fourier transform is described as
\begin{equation}
    \Phi_m(t) = \int e^{(it - \lambda_m)z} dx = \frac{1}{1 + it\lambda_m^{-1}}~.
\end{equation}

To find the distribution of $u$, we need to compute the convolution of $N_\mathrm{b}$ exponential random variables, which is equivalent to the product of their characteristic functions and expressed as 
\begin{equation}
\begin{split}
    \Phi_{u_p} =& \int e^{itz} (\lambda_1 e^{-\lambda_1 z} \times \lambda_2 e^{-\lambda_2 z} \times \cdots \times \lambda_{N_\mathrm{b}} e^{-\lambda_{N_\mathrm{b}} z}) dz \\ =& \prod_{m=1}^{N_\mathrm{b}} \frac{1}{1 + it\lambda_m}~.
    \end{split}
\end{equation}
Now, the PDF of $u$ can be found by taking the inverse Fourier transform of $\Phi_u(t)$ as
\begin{equation}
    f_{u_p}(u_p) = \int_{-\infty}^{\infty} \frac{1}{2\pi j t} \prod_{m=1}^{N_\mathrm{b}} \frac{1}{1 + it\lambda_m^{-1}} dt~.
\end{equation}

Hence, we obtain the CDF of ${R_p^0} - {R_p^1}$ as
\begin{equation}
    F_{{R_p^0} - {R_p^1}}(\eta) = \int_0^{\infty} F_{{R_p^0} - {R_p^1}}(\eta; v) f(v_p) dv_p~,
\end{equation}
where
\begin{equation}
    \int_0^{\infty} \frac{1}{2} - \int_0^{\infty} \int_{-\infty}^{\infty} \frac{\Phi_{{R_p^0}|v_p}(t) \Phi_{{R_p^1}|v_p}^*(t)}{2\pi j t} e^{-jtz} dt f(v_p) dv_p~.
\end{equation}

% \bibliography{IEEEfull,references}
% \bibliographystyle{IEEEtran}

% Generated by IEEEtran.bst, version: 1.14 (2015/08/26)

\end{document}